\newcommand{\basis}{\mathcal{B}}
\newcommand{\states}{\mathcal{S}}
\newcommand{\effects}{\mathcal{E}}
\newcommand{\innerproduct}[2]{\left\langle #1 , #2 \right\rangle}
\newcommand{\adjoint}[1]{#1^{\dagger}}
\newtheorem{theorem}{Theorem}
\newtheorem{lemma}{Lemma}
\newtheorem{conjecture}{Conjecture}
\begin{document}

\title{Reversibility and the structure of the local state space}

\author{Sabri W. Al-Safi}
\email{Sabri.Alsafi@ntu.ac.uk}
\affiliation{School of Science \& Technology, Nottingham Trent University, Burton Street, Nottingham, NG1 4BU, UK}

\author{Jonathan Richens}
\email{Jonathan.Richens08@imperial.ac.uk}
\affiliation{Controlled Quantum Dynamics Theory Group, Department of Physics, Imperial College London, London SW7 2AZ}

\normalem

\begin{abstract}
The richness of quantum theory's reversible dynamics is one of its unique operational characteristics, with recent results suggesting deep links between the theory's reversible dynamics, its local state space and the degree of non-locality it permits. We explore the delicate interplay between these features, demonstrating that reversibility places strong constraints on both the local and global state space. Firstly, we show that all reversible dynamics are trivial (composed of local transformations and permutations of subsytems) in maximally non-local theories whose local state spaces satisfy a dichotomy criterion; this applies to a range of operational models that have previously been studied, such as $d$-dimensional ``hyperballs'' and almost all regular polytope systems. By separately deriving a similar result for odd-sided polygons, we show that classical systems are the only regular polytope state spaces whose maximally non-local composites allow for non-trivial reversible dynamics. Secondly, we show that non-trivial reversible dynamics do exist in maximally non-local theories whose state spaces are reducible into two or more smaller spaces. We conjecture that this is a necessary condition for the existence of such dynamics, but that reversible entanglement generation remains impossible even in this scenario.
\end{abstract}

\maketitle

\section{Introduction}

In the quest to understand why quantum mechanics accurately predicts natural phenomena, it is prudent to investigate the properties that distinguish it from classical mechanics and from other conceivable theories of nature. Exploring these properties leads to the development of algorithms for information-based tasks \cite{ekert91}, and provides insight into counter-intuitive quantum phenomena such as the prediction of non-local correlations \cite{bell64, pr94}, teleportation \cite{barnum08}, and the impossibility of cloning \cite{barnum06}. One property that seems particularly unique to quantum theory is reversibility: the fact that any two pure states of a system are connected by a continuous, reversible transformation. By considering the conservation of information, one might reasonably expect reversibility to hold for any physical theory. This is further supported by the fact that reversibility (or some variant of it) plays a pivotal role in information-theoretic reconstructions of quantum theory \cite{hardy01, hardy11, chiribella11, masanes11, masanes12, barnum14}.

By viewing quantum theory as one of a broad range of so-called \emph{general probabilistic theories} \cite{barrett05, barrett05.2, masanes06, dariano07, chiribella10, barnum12}, a growing number of recent results have hinted at a deep relationship between the local quantum state space and the property of reversibility. For example, hypothetical theories have been explored in which the local state space takes the form of a $d$-dimensional ball: for $d>3$, there can be no continuous, reversible interactions between two identical systems \cite{muller11}; for $d=3$ (the Bloch sphere), the only bipartite state space which allows for continuous, reversible interactions is given by the set of 2-qubit quantum states \cite{masanes12}. Therefore, out of all theories whose local state spaces are balls, only quantum theory is reversible. Demanding this local ball structure may itself be motivated on reasonable grounds (for example, by a modified form of Information Causality \cite{pawlowski09, masanes13}), thus these results are useful for generating minimal sets of principles which single out quantum theory.

In another popular hypothetical theory known as \emph{Boxworld}, every valid non-signaling outcome distribution over joint local measurements corresponds to an allowed state \cite{short&barrett10}. Boxworld is \emph{maximally non-local} - meaning that any composite state which is compatible with local states is allowed - thus giving rise to super-strong correlations such as the \emph{Popescu-Rohrlich box} \cite{pr94}. It is known that reversible dynamics in Boxworld are trivial: as long as no subsystem is classical, the only reversible transformations of a composite Boxworld system are composed of relabelings of measurement inputs and outputs, and permutations of subsystems \cite{colbeck10, alsafi14}. In particular, Boxworld is therefore not reversible, because a pure product state cannot be reversibly transformed into a Popescu-Rohrlich box. In fact, no correlations whatsoever can be reversibly generated between independent systems, demonstrating further that there is no ``Church of the larger Boxworld system'' in which a Boxworld measurement extends to a reversible transformation.

The mathematical structure of quantum theory differs from Boxworld in two significant ways. Firstly, Boxworld systems have only a finite number of pure states and hence their state space forms a convex polytope, whereas the quantum state space has an infinite number of pure states. Secondly, quantum theory is not maximally non-local, as there exist operators with negative global eigenvalues whose ``reduced states'' are nevertheless valid on their respective subsystems. We will demonstrate that Boxworld's trivial reversible dynamics in fact extends to all maximally non-local theories whose local state spaces satisfy a well-defined dichotomy criterion. Intuitively, dichotomic systems are those for which all maximally informative measurements have just two outcomes; this is similar to having the information capacity of a single bit, which has recently been studied as a fundamental postulate for local quantum systems \cite{masanes13}. In the case of polytopic state spaces, dichotomy is related to the geometric property of having diametrically opposed facets; in fact all  regular polytope state spaces except for $n$-simplexes (i.e. classical systems) and odd-sided polygons (introduced in \cite{janotta11}) are dichotomic. We provide a separate proof for the case of odd-sided polygons, demonstrating that classical theories are the only maximally non-local, reversible theories whose state spaces are regular polytopes. Many non-polytopic state spaces are also dichotomic, such as the $d$-dimensional balls mentioned above.

These results seem to suggest that reversible dynamics are always trivial in maximally non-local (and non-classical) theories. However, we will also demonstrate that classical correlations can in fact be reversibly generated if one of the local state spaces is reducible (or decomposable) into two or more smaller spaces. This is achieved via an analogue of the classical CNOT gate, where one controls on which of these smaller state spaces the state of the system is in. We leave as open questions whether non-local correlations can be reversibly generated when one or more subsystems are reducible, and whether the irreducibility of all subsystems is sufficient for all reversible dynamics to be trivial. Reducibility has previously arisen in the study of general probabilistic theories, although not in the context of reversibility \cite{barnum14}.

This article proceeds as follows: in Section \ref{set-up-section} we describe the formalism of general probabilistic theories; in Section \ref{transformations-section} we show how transformations are defined and give a useful necessary and sufficient condition for a reversible transformation to be trivial; in Section \ref{dichotomic-section} we show that all reversible dynamics are trivial in maximally non-local theories whose local systems are dichotomic; finally, in Section \ref{reducibility-section} we show that non-trivial transformations exist if one or more local systems are reducible, and conjecture that this is a necessary condition for reversible interactions in maximally non-local theories.

\section{Set-up \& Notation} \label{set-up-section}

In this Section we introduce the well-established framework of \emph{general probabilistic theories} which provides an operational formalism for modeling the observation of physical phenomena. This framework applies to almost any theory of nature involving systems whose states inform the outcome probabilities of future measurements. Conversely, it is straightforward to construct new probabilistic models which share many of the geometric features of quantum systems, and to investigate their operational properties with quantum theory. In standard quantum theory, a system is described by a complex Hilbert space $\mathcal{H}$, states correspond to density operators on $\mathcal{H}$ and effects correspond to positive operators $0 \leq E \leq \mathbb{I}$. Letting $V$ denote the real vector space of Hermitian operators on $\mathcal{H}$ equipped with the inner product $\innerproduct{A}{B} = Tr(AB)$, the sets of unnormalized states and effects are both identical to the self-dual cone in $V$ which is the set of positive operators. Moreover, the inner product between an effect and a state gives the probability of that effect occurring in a system which has been prepared in that state.

Under some basic assumptions, any system in a general probabilistic theory may be represented by means of a real, finite-dimensional inner product space $V$, in which the \emph{state space} $\mathcal{S}$ forms a compact, convex subset. The \emph{state cone} $\mathcal{S}_{+} \subseteq V$ is defined as the cone generated by $\mathcal{S}$, and the \emph{effect cone} $\mathcal{E}_+ \subseteq V$ is defined as the dual cone to $\mathcal{S}_+$. Conversely, $\states_+$ is the dual cone to $\effects_+$. We further assume that the cones $\states_+$ and $\effects_+$ are both pointed and generating, and that there exists a (unique) \emph{unit effect} $u \in \effects_+$ such that $\innerproduct{u}{e} = 1$ for all $s \in \states$.

A \emph{measurement} of the system consists of a set of effects $\{e_1 , \ldots , e_r\}$ which satisfy the normalization condition $\sum_i e_i = u$. For a system which has been prepared in state $s$, the probability of obtaining the outcome corresponding to $e_i$ is given by the inner product $\innerproduct{e_i}{s}$. Note that the normalization condition on the effects ensures that the outcome statistics of any measurement are likewise normalized.

The cone $\effects_+$ naturally induces a partial ordering on vectors in $V$: we say that $v \leq_{\effects_+} w$ if there exists some $e \in \effects_+$ such that $w = v + e$. If $e \leq_{\effects_+} f$ for effects $e, f$, we say that $e$ \emph{refines} $f$; for any measurement involving $f$, it is possible to replace $f$ by the two effects $e$ and $(f-e)$ and so obtain a new measurement which is at least as informative as the old one (if not more so). An effect $e$ is said to be \emph{proper} if $e \leq_{\effects_+} u$, i.e. $0 \leq \innerproduct{e}{s} \leq 1$ for all states $s \in \states$. We denote the set of proper effects by $\effects$, and note that it is a compact, convex subset of $V$.

The notion of a pure state in quantum theory has a natural analogy in the general probabilistic framework: a state $s$ is \emph{pure} if it is an extreme point of $\states$. Likewise, a proper effect is said to be \emph{extreme} if it is an extreme point of $\effects$, and \emph{ray-extreme} if it is extreme and generates an extreme ray of $\effects_+$. In quantum theory, pure states and ray-extreme effects are given by rank-one projectors, and extreme effects are given by projectors of any rank. Note that in 2-level quantum systems, the set of extreme effects (minus the zero and unit effects) coincides with the set of ray-extreme effects, whereas in higher-level systems there exist extreme effects which are not ray-extreme.

In this article we are interested in composite systems comprising $N \geq 2$ subsystems of the above type. A possible measurement on this composite system involves performing a local measurement individually on each subsystem. We assume that any state of the composite system may be characterized uniquely by the conditional probability distribution $P(a_1, \ldots , a_N | x_1 , \ldots , x_N)$ giving the probability of the outcomes $a_i$ occurring when the local measurements $x_i$ are performed separately on each subsystem. This assumption is often known as \emph{local tomography} \cite{masanes11,masanes12,barnum12,chiribella12}.

Local measurements on the subsystems may in principle represent physically separated events, hence we also assume that the choice of measurement on any single subsystem does not affect the marginal outcome statistics on the remaining subsystems, an assumption known as the \emph{non-signaling condition} \cite{barrett05}. This condition may be expressed mathematically as the requirement that for $i = 1 ,\ldots , N$, the following sum is independent of the value of $x_i$:
\begin{equation} \label{nonsignaling-condition}
\sum_{a_i} P(a_1 , \ldots , a_i , \ldots , a_n | x_1 , \ldots , x_i , \ldots , x_n ).
\end{equation}
The non-signaling condition implies that states of composite systems have well-defined reduced states which are obtained by ``tracing out'' one or more subsystems as in \eqref{nonsignaling-condition}. For consistency, we require that states of composite systems must have reduced states which correspond to genuine states on each local subsystem.

Suppose that for $i = 1, \ldots, N$, subsystem $i$ is represented by the vector space $V^{(i)}$, with state cone $\states^{(1)}_+$, effect cone $\effects^{(i)}_+$, and unit effect $u^{(i)}$. The \emph{max tensor product} of these subsystems is the set of all composite states which satisfy the non-signaling condition and local tomography, and whose reduced states are valid states of the local subsystems. The max tensor product can be neatly represented in the tensor product space $V = V^{(1)} \otimes \cdots \otimes V^{(N)}$ by defining the composite effect cone $\effects_+$ to be the cone generated by product effects $e = e^{(1)} \otimes \cdots \otimes e^{(N)}$, where $e^{(i)}$ (which we refer to as the $i$th component of $e$) is a member of $\effects^{(i)}$. The composite unit effect is given by $u = u^{(1)} \otimes \cdots \otimes u^{(N)}$. The composite state cone $\states_+$ is then defined as the dual cone to $\effects_+$, and the normalized states are those $s \in \states_+$ for which $\innerproduct{u}{s} = 1$. For a product effect $e$ and a subset $\Omega \subseteq \{1, \ldots , N\}$, it will be convenient to use the notation $e^{\Omega}$ to refer to the tensor product of those components of $e$ which belong to subsystems in $\Omega$, for example if $\Omega = \{1, 3, 4\}$ then $e^{\Omega} = e^{(1)} \otimes e^{(3)} \otimes e^{(4)}$. Note that $e^{\Omega}$ is itself a product effect in the reduced tensor product space $V^{\Omega} = V^{(1)} \otimes V^{(3)} \otimes V^{(4)}$.

A general probabilistic theory in which systems combine under the max tensor product is referred to as \emph{maximally non-local}. Note that any collection of local state spaces can be combined into a maximally non-local composite space, although the theory known as Boxworld is a canonical example of this \cite{short&barrett10}. In Boxworld, each subsystem $i = 1, \ldots , N$ is equipped with a finite set of fiducial measurements indexed by $x_i$, and each measurement choice gives rise to a finite set of outcomes indexed by $a_i$. Any non-signaling conditional probability distribution $P(a_1, \ldots , a_N | x_1 , \ldots x_N)$ then corresponds to an allowed state on the composite system. It is well known that Boxworld allows for much stronger correlations and information-processing capabilities between distant parties than are achievable in quantum theory \cite{vandam05,buhrman06, almeida10}, but that the set of reversible dynamics is extremely restricted \cite{colbeck10, alsafi14}.

\begin{figure}[t] 
\centering
\includegraphics[scale=0.3]{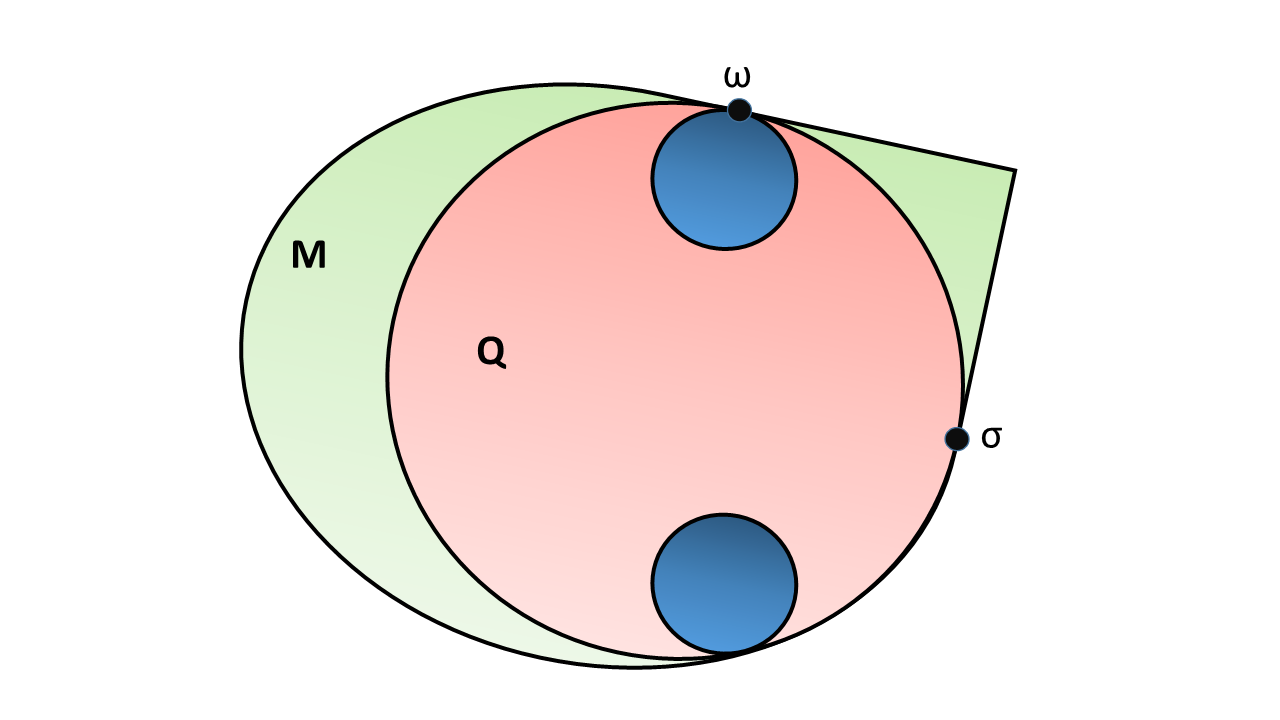}
\caption{Two possible composite state spaces: the local qubit state spaces (blue) can combine under either the standard quantum product \textbf{Q} (red), or the max tensor product \textbf{M} (green). The pure states $\omega$ and $\sigma$ represent a product state and a maximally entangled state respectively. The geometry of the composite state space influences the possible reversible dynamics; $\omega$ and $\sigma$ are linked by a reversible transformation of \textbf{Q}, but not by any reversible transformation of \textbf{M}. (Colour online.)}
\label{bipartite-state-space-figure}
\end{figure}

In quantum theory on the other hand, systems do not combine under the max tensor product. Rather, the composite state and effect cones are both given by the set of positive matrices over the tensor product of the Hilbert spaces representing each subsystem. The quantum cone strictly contains the max tensor product effect cone $\effects_+$, as not all of its extreme rays are tensor products of local projectors. The quantum cone is strictly smaller than the max tensor product state cone $\states_+$, as it does not contain ``entanglement witness'' states, i.e. operators which have negative eigenvalues, but which have positive inner product with any tensor product of local effects (Fig. \ref{bipartite-state-space-figure}). Exploring how fundamental physical concepts like reversibility break down in max tensor product theories like Boxworld provides insight into what principles, beyond local tomography and the non-signaling condition, constrain the set of quantum-achievable correlations.

We now introduce some further terminology of central importance in the discussion of reversible dynamics. A \emph{composite ray-extreme effect} is a tensor product of local ray-extreme effects, i.e. $e = e^{(1)} \otimes \cdots \otimes e^{(N)}$, where each $e^{(i)} \in \effects_+^{(i)}$ is ray-extreme.  We say that two composite ray-extreme effects are \emph{adjacent} if they differ on exactly one subsystem, and that they are \emph{adjacent on subsystem i} if it is subsystem $i$ on which they differ. For example, the effect $f = f^{(1)} \otimes e^{(2)} \otimes \cdots \otimes e^{(N)}$ is adjacent on subsystem 1 to the effect $e$ above, as long as $f^{(1)}$ is a ray-extreme effect distinct from $e^{(1)}$. A \emph{sub-unit effect} $E$ is a product effect whose $i$th component is $u^{(i)}$ for some $1\leq i \leq N$, and whose $j$th component for $j\neq i$ is some ray-extreme effect $e^{(j)}$, i.e. $E = e^{(1)} \otimes \cdots \otimes u^{(i)} \otimes \cdots \otimes e^{(N)}$. We say a sub-unit effect $E$ is an \emph{i-sub-unit} effect if $E^{(i)} = u^{(i)}$. Intuitively, a sub-unit effect corresponds to a ray-extreme effect of the reduced system after subsystem $i$ has been ``traced out''. The following result applies to any general probabilistic theory regardless of the structure of individual systems, and whether or not those systems combine under the max tensor product.

\begin{lemma} \label{leq-sub-unit-lemma}
Let $f$ and $g$ be distinct composite ray-extreme effects which both refine the same $i$-sub-unit effect $E$. Then $f$ and $g$ are adjacent at subsystem $i$.
\end{lemma}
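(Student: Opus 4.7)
The plan is to show that for every $j \neq i$ the local components $f^{(j)}$ and $g^{(j)}$ are both forced to equal $e^{(j)}$, so that $f$ and $g$ can differ only on subsystem $i$; since $f \neq g$ by hypothesis, they are then adjacent at $i$.

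The first step is to translate the global refinement $E - f \in \effects_+$ into a local inequality on an arbitrary subsystem $j \neq i$. I would evaluate $\innerproduct{f}{s} \leq \innerproduct{E}{s}$ on product test states $s = s^{(1)} \otimes \cdots \otimes s^{(N)}$ (always valid composite states, independent of the composition rule) and expand both sides as products of local pairings. Freezing $s^{(i)}$ at any local state with $\innerproduct{f^{(i)}}{s^{(i)}} > 0$, and for each $k \notin \{i,j\}$ freezing $s^{(k)}$ at a mixture realising $\innerproduct{f^{(k)}}{s^{(k)}} > 0$ and $\innerproduct{e^{(k)}}{s^{(k)}} > 0$ simultaneously, the global inequality collapses to $\innerproduct{f^{(j)}}{s^{(j)}} \leq \lambda \innerproduct{e^{(j)}}{s^{(j)}}$ for some constant $\lambda > 0$ and every local state $s^{(j)} \in \states^{(j)}_+$. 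By duality between $\effects^{(j)}_+$ and $\states^{(j)}_+$, this means $\lambda e^{(j)} - f^{(j)} \in \effects^{(j)}_+$.

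The second step exploits that $e^{(j)}$ spans an extreme ray of $\effects^{(j)}_+$ (since it is ray-extreme as a non-$i$ component of the $i$-sub-unit effect $E$). Reading $\lambda e^{(j)} = f^{(j)} + (\lambda e^{(j)} - f^{(j)})$ as a decomposition in the cone and invoking extremality of the ray forces $f^{(j)} = \mu e^{(j)}$ with $\mu \geq 0$, and positivity of $\mu$ follows from $f^{(j)} \neq 0$. Finally, on a single extreme ray of $\effects^{(j)}_+$ at most one non-zero proper effect can be extreme in $\effects^{(j)}$: the segment of proper effects on the ray has the form $\{t e^{(j)} : 0 \leq t \leq T\}$, and any interior point is a non-trivial convex combination of $0$ and the endpoint, hence not extreme in $\effects^{(j)}$. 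So $f^{(j)} = e^{(j)}$, and the identical argument applied to $g$ gives $g^{(j)} = e^{(j)}$, as required.

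I do not anticipate a serious obstacle. The mildly delicate point is the simultaneous positivity of all the local pairings one needs when freezing the product test state; this is handled routinely by mixing local states that individually realise each required strict inequality, using that ray-extreme effects are non-zero while the local state cones are generating.
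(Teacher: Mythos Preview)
Your argument is correct. The route differs from the paper's in packaging, though the same two ingredients---product test states and the extreme-ray property of the components $e^{(j)}$---do the work in both.

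The paper proceeds by contradiction: assuming $f^{(j)} \neq E^{(j)}$ for some $j \neq i$, it picks for each $k \neq j$ a local state $s^{(k)}$ on which $f^{(k)}$ attains the value~$1$ (such a state exists precisely because a ray-extreme effect must be the maximal proper effect on its ray, which is your Step~3 observation in disguise), and then exhibits a single product state witnessing $\innerproduct{f}{s} > \innerproduct{E}{s}$. You instead argue directly: freezing the $s^{(k)}$ only at states with \emph{positive} pairings, you extract the local cone inequality $\lambda e^{(j)} - f^{(j)} \in \effects^{(j)}_+$, then invoke extremality of the ray through $e^{(j)}$ to force $f^{(j)} \parallel e^{(j)}$, and finally use uniqueness of the non-zero extreme proper effect on a ray to get $f^{(j)} = e^{(j)}$. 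Your version makes the role of the extreme-ray structure more explicit and avoids needing the value-$1$ states; the paper's version is shorter and gets the contradiction in one displayed line. Both are sound.
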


\begin{proof}
If $f$ and $g$ are not adjacent at subsystem $i$, then without loss of generality there is some $j \neq i$ for which $f^{(j)} \neq E^{(j)}$. Since $f^{(j)}$ and $E^{(j)}$ are distinct ray-extreme effects on subsystem $j$, we have $f^{(j)} \nleq_{\effects_+} E^{(j)}$, hence there exists a local pure state $s^{(j)}$ for which $\innerproduct{f^{(j)}}{s^{(j)}} > \innerproduct{E^{(j)}}{s^{(j)}}$. For all remaining subsystems $k\neq j$, let $s^{(k)}$ be any pure state for which $\innerproduct{f^{(k)}}{s^{(k)}} \geq \innerproduct{E^{(k)}}{s^{(k)}}$. Then the pure product state $s = \bigotimes_{i=k}^{N}s^{(k)}$ satisfies
\begin{equation}
\innerproduct{f}{s} = \innerproduct{f^{(j)}}{s^{(j)}} > \innerproduct{E^{(j)}}{s^{(j)}} \geq \innerproduct{E}{s} .
\end{equation} 
This implies that $f \nleq_{\effects_+} E$, thus the result follows.
\end{proof}

\section{Transformations} \label{transformations-section}

In this Section we discuss the reversible dynamics of systems in general probabilistic theories. Given systems $V^{(1)}$ and $V^{(2)}$ with state spaces $\states^{(1)}$ and  $\states^{(2)}$, an allowed transformation $T$ from $V^{(1)}$ to $V^{(2)}$ is given by a mapping of $\states^{(1)}$ into $\states^{(2)}$. By considering probabilistic mixtures of states, it may be assumed that $T$ is \emph{convex-linear}, i.e. for any $s_1, s_2 \in \states$ and $0 \leq p \leq 1$, $T(ps_1 + (1-p) s_2) = pT(s_1) + (1-p) T(s_2)$. This assumption, along with the fact that $\states^{(1)}$ lies in the hyperplane of vectors which have unit inner product with the unit effect $u^{(1)}$, allows $T$ to be extended to a full linear map on $V^{(1)}$ \cite{barrett05}. $T$ is \emph{reversible} if this linear map has an inverse $T^{-1}$ which is also an allowed transformation; in this case we say that $V^{(1)}$ and $V^{(2)}$ are \emph{equivalent} systems. In this Section we will be concerned with transformations mapping a system $V$ to itself.

From an operational perspective, transformations are characterized by how they affect the outcome probabilities of later measurements. Given that $\innerproduct{e}{T(s)} = \innerproduct{\adjoint{T}(e)}{s}$, a transformation may equivalently be described via the action of the adjoint $\adjoint{T}$ on the set $\effects$. If $T$ is reversible, it is not hard to show by linearity that $\adjoint{T}$ maps ray-extreme effects to ray-extreme effects. Conversely, any linear map which permutes the set of ray-extreme effects and maps the unit effect to the unit effect is the adjoint of an allowed reversible transformation.

Whilst the above comments apply to any general probabilistic theory, in the remainder of this Section we are concerned with reversible transformations acting on a composite system $V$ which is the max tensor product of subsystems $V^{(1)}, \ldots , V^{(N)}$. Note that in this case, $\adjoint{T}$ acts as a permutation on the set of composite ray-extreme effects. Two classes of reversible transformations naturally arise in this setting. Firstly, $T$ is a \emph{local} transformation if there is a reversible transformation $T^{(i)} : V^{(i)} \rightarrow V^{(i)}$ such that
\begin{align}
\adjoint{T}( &e^{(1)} \otimes \cdots \otimes e^{(N)}) = \nonumber \\
&e^{(1)} \otimes \cdots \otimes \adjoint{ \left[ T^{(i)} \right] }(e^{(i)}) \otimes \cdots \otimes e^{(N)} .
\end{align}
Secondly, $T$ is a \emph{permutation} of subsystems $i$ and $j$ if there is a reversible linear map
\begin{equation}
P_{ij} : V^{(i)} \rightarrow V^{(j)}
\end{equation}
which maps $\effects^{(i)}$ bijectively onto $\effects^{(j)}$  (i.e. subsystems $i$ and $j$ are equivalent), such that
\begin{align}
\adjoint{T}( &e^{(1)} \otimes \cdots \otimes e^{(i)} \otimes \cdots \otimes e^{(j)} \otimes \cdots \otimes e^{(N)}) = \nonumber \\
&e^{(1)} \otimes \cdots \otimes P_{ij}^{-1} (e^{(j)}) \otimes \cdots \otimes P_{ij} (e^{(i)}) \otimes \cdots e^{(N)} .
\end{align}

A \emph{trivial} transformation is one that is a composition of local transformations and permutations of subsystems. Since local transformations are a special case of permutations of subsystems in which $i=j$, trivial transformations may be regarded simply as compositions of permutations of subsystems. Note that both these types of transformation map pure product states to pure product states, hence trivial transformations are incapable of generating even classical correlations between systems that have not previously interacted. 

In Boxworld, it has been shown that all reversible transformations are trivial, so long as none of the subsystems are classical \cite{colbeck10}. This result makes use of a combinatorial argument concerning how $\adjoint{T}$ maps pairs of composite ray-extreme effects. In particular, the fact that $\adjoint{T}$  is adjacency-preserving as a permutation of the finite set of composite ray-extreme effects is sufficient to deduce that $T$ is trivial. In the following Lemma we modify this argument in order to apply it to the more general scenario involving the max tensor product of arbitrary systems. This generalization is necessary firstly because the number of local ray-extreme effects may no longer be finite, and secondly because there may no longer be be a natural way of identifying effects between equivalent subsystems.

\begin{lemma} \label{adjacency-preserving-lemma}
Allowed, reversible, adjacency-preserving transformations of the composite system $V$ are trivial.
\end{lemma}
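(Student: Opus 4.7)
The strategy is to show that $\adjoint{T}$ factors as a permutation of subsystems composed with local transformations. For each composite ray-extreme effect $e$ and subsystem $i$, let $L_i(e)$ denote the ``$i$-line through $e$'', namely the set of composite ray-extreme effects agreeing with $e$ on every subsystem other than $i$. Any two distinct members of $L_i(e)$ are adjacent on subsystem $i$, so by adjacency-preservation $\adjoint{T}(L_i(e))$ is a collection of pairwise adjacent ray-extreme effects. If two of these adjacencies occurred on distinct subsystems $j \neq j'$, then the third pair would differ on both $j$ and $j'$ and fail to be adjacent. Hence there is a unique index $j(e, i)$ with $\adjoint{T}(L_i(e)) \subseteq L_{j(e,i)}(\adjoint{T}(e))$; equality follows by applying the same argument to $\adjoint{(T^{-1})}$.

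Next I would show that $j(e, i)$ depends only on $i$. Fix subsystems $i \neq k$ and the components on all remaining subsystems, and consider the two-parameter grid of composite ray-extreme effects obtained by varying the components on $i$ and $k$. Suppose for contradiction that two parallel $i$-lines (at $y = y_0$ and $y = y_1$) map to image lines in distinct directions $a$ and $b$. For any choice of distinct $x_0, x_1$, the four image corners are linked by a cycle of four adjacencies on subsystems in the set $\{a, b, c_0, c_1\}$, where $c_0, c_1$ are the directions of the images of the two $k$-lines in the grid. A case analysis on coincidences within $\{a, b, c_0, c_1\}$ yields a contradiction in every case: generically, tracing the cycle in the two possible ways forces one diagonal pair of images to differ on two inconsistent sets of subsystems; in the degenerate case $c_0 = a$, $c_1 = b$, a direct component-by-component calculation shows that two distinct preimages would share the same $\adjoint{T}$-image, violating injectivity. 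Therefore $j(e, i)$ is constant on every $(i, k)$-plane, and chaining across planes sharing a common $i$-line gives a function $\pi$ of $i$ alone with $j(e, i) = \pi(i)$.

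Finally, if $\pi(i) = \pi(j)$ for some $i \neq j$, an analogous argument in an $(i, j)$-plane would collapse the image of a $2 \times 2$ subgrid onto a single $\pi(i)$-line, making two diagonally-opposite (and hence non-adjacent) preimages map to adjacent images, contradicting adjacency-preservation of $\adjoint{(T^{-1})}$. Hence $\pi$ is a permutation; set $\sigma = \pi^{-1}$. Fix any composite ray-extreme basepoint $e_0$ and, for each ray-extreme $x \in \effects_+^{(\sigma(i))}$, define $h_i(x)$ to be the $i$-component of the $\adjoint{T}$-image obtained by replacing the $\sigma(i)$-component of $e_0$ by $x$. Applying the previous paragraph along chains of basepoints that differ on a single subsystem at a time shows $h_i$ is independent of $e_0$; linearity of $\adjoint{T}$ together with the fact that the ray-extreme effects generate $V^{(\sigma(i))}$ then extend $h_i$ uniquely to a reversible linear map carrying $\effects^{(\sigma(i))}$ bijectively onto $\effects^{(i)}$, and the overall normalization $\adjoint{T} u = u$ pins down the tensor-factor scalings so that $h_i$ sends unit to unit. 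One obtains $\adjoint{T} = (h_1 \otimes \cdots \otimes h_N) \circ P_\sigma$, a composition of a subsystem permutation with local transformations, i.e., a trivial transformation. The main obstacle in this plan is the degenerate casework in the second paragraph: the generic configuration with $a, b, c_0, c_1$ all distinct is transparent from the cycle constraint, but the coincident cases, especially $c_0 = a$ and $c_1 = b$, force one to use injectivity of $\adjoint{T}$ to rule out otherwise-consistent configurations.
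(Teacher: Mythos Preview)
Your overall strategy is sound and can be completed, but it is genuinely different from the paper's argument. Both proofs begin with the observation that an $i$-line $L_i(e)$ is sent to a $j$-line through $\adjoint{T}(e)$; the divergence is in how one shows that the direction $j$ is independent of $e$ and defines a permutation.

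The paper fixes a single basepoint $e$, chooses a basis $\{e^{(i)}_j\}$ of each $V^{(i)}$ consisting of ray-extreme effects, and looks at the collection $R=\bigcup_i R_i$ of tensor-basis vectors adjacent to $e$. Since $\adjoint{T}$ is linear and invertible, $\adjoint{T}(R)\cup\{f\}$ is still linearly independent, and each $\adjoint{T}(R_i)$ sits inside $L_{\sigma(i)}(f)$. A dimension count on the span of $\bigcup_j L_j(f)$ then forces $\sigma$ to be a bijection with $n_{\sigma(i)}=n_i$, with no casework at all. The paper then builds a trivial $P$ that agrees with $\adjoint{T}$ on $e$ and on each $R_i$, and finishes by an induction on Hamming distance from $e$: every tensor-basis vector $g$ is uniquely determined by its neighbours at Hamming distance $d_H(g,e)-1$, so $\adjoint{P}\adjoint{T}$ fixes the whole basis and is the identity.

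Your route instead proves basepoint-independence of $j(e,i)$ by a two-dimensional grid argument, which is correct but requires the case analysis you flag as the main obstacle. Two remarks may shorten that analysis. First, the equality $\adjoint{T}(L_i(e))=L_{j(e,i)}(\adjoint{T}(e))$ that you obtain from $\adjoint{(T^{-1})}$ already excludes the coincidences $c_0\in\{a,b\}$ and $c_1\in\{a,b\}$: if, say, $c_0=a$, then both $L_i(P_{00})$ and $L_k(P_{00})$ would map bijectively onto the same line $L_a(Q_{00})$, so $\adjoint{(T^{-1})}$ would send $L_a(Q_{00})$ to two distinct lines through $P_{00}$. Second, once $c_0,c_1\notin\{a,b\}$, the cycle constraint gives that $Q_{00}$ and $Q_{11}$ differ only on $\{a,c_1\}\cap\{c_0,b\}$; this set is empty unless $c_0=c_1$, and in the latter case one quickly sees $Q_{00}=Q_{10}$, contradicting injectivity. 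So the casework collapses.

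In summary: your proof is viable and essentially complete once the grid cases are tidied up, but the paper's linear-independence argument at a single basepoint, followed by the Hamming-distance induction, avoids the casework altogether and is noticeably shorter.
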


\begin{proof}
We construct a trivial reversible transformation $P$, and show that $\adjoint{P}\adjoint{T}$ is the identity transformation. Since trivial transformations have trivial inverses, it follows that $T$ itself is trivial.

Fix a composite ray-extreme effect $e = e^{(1)} \otimes \cdots \otimes e^{(N)}$, and let $f = \adjoint{T}(e)$. For $i = 1, \ldots , N$, let $n_i = \text{dim}(V^{(i)})$ and construct a basis $\basis^{(i)} = \left\{e^{(i)}_j\right\}_{j=1}^{n_i}$ of $V^{(i)}$ consisting of local ray-extreme effects, for which $e^{(i)}_1 = e^{(i)}$. Define
\begin{equation}
\hat{e}_{ij} = e^{(1)} \otimes \cdots \otimes e^{(i)}_j \otimes \cdots \otimes e^{(N)}
\end{equation}
and denote by $R_i$ the set of $n_i - 1$ basis vectors $\left\{\hat{e}_{ij}\right\}_{j=2}^{n_i}$ which are adjacent to $e$ on subsystem $i$. As $\adjoint{T}$ is adjacency-preserving, it maps $R_i$ to a set of vectors which are adjacent to $f$ on some subsystem which we will denote $\sigma(i)$. Moreover, the set $R = \bigcup_{i} R_i$ must be mapped to a set of $\sum_{i}(n_i - 1)$ linearly independent vectors, all of which are adjacent to $f$. This is only possible if $\sigma$ is a permutation of $\{1, \ldots , N\}$ and $n_{\sigma(i)} = n_i \; \forall i$.

Define $P_{1\sigma(1)} : V^{(1)} \rightarrow V^{(\sigma(1))}$ to be the linear extension of
\begin{equation}
e^{(1)}_j \rightarrow \left[ \adjoint{T}(\hat{e}_{1j})\right]^{\sigma(1)} \;\; \text{for } j = 1, \ldots , n_1.
\end{equation}
Note that $P_{1\sigma(1)}$ is reversible and maps $\effects^{(1)}$ bijectively onto $\effects^{(N)}$. Therefore $P_{1\sigma(1)}$ induces a permutation $P_1$ of subsystems 1 and $\sigma(1)$, such that $\adjoint{P_1}\adjoint{T}$ fixes the first component of each element of $R_1\cup\{e\}$. The above process may be repeated to obtain  $P_2$ such that $\adjoint{P_2}\adjoint{P_1}\adjoint{T}$ fixes the first component of each element of $R_1\cup\{e\}$ as well as the second component of each element of $R_2\cup\{e\}$.

After at most $N$ steps, we eventually construct a trivial transformation $P$ such that $\adjoint{P}\adjoint{T}$ is adjacency-preserving and fixes the $i$th component of each $\hat{e}_{ij}$. We argue that $\adjoint{P}\adjoint{T}$ is the identity, by considering its action on elements of the tensor-product basis $\basis$ arising from the bases $\basis^{(i)}$ for each subsystem $i$. Note that $\adjoint{P}\adjoint{T}$ fixes $e$, and therefore fixes every member of $\basis$ which is adjacent to $e$. Observe that any $g \in \basis$ is uniquely defined by the set of effects $g' \in \basis$ such that $d_H(g',e) = d_H(g,e)-1$ and $d_H(g',g) = 1$. By induction on $d_H(g,e)$, it is clear that $\adjoint{P}\adjoint{T}$ fixes every element of $\basis$, and hence is the identity transformation.
\end{proof}

\begin{lemma} \label{image-sub-unit-lemma}
Let $T$ be an allowed, reversible transformation of the composite system $V$. Then $T$ is trivial if and only if the image of every sub-unit effect refines a sub-unit effect.
\end{lemma}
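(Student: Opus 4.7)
The plan is to prove both directions by leveraging the two preceding lemmas. The forward direction (triviality implies the sub-unit refinement condition) should be essentially a direct check, while the backward direction (the sub-unit refinement condition implies triviality) will be reduced to showing adjacency-preservation and then invoking Lemma \ref{adjacency-preserving-lemma}.

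For the forward direction, I would argue that trivial transformations map sub-unit effects to sub-unit effects. A reversible local transformation $T^{(j)}$ satisfies $[T^{(j)}]^{\dagger}(u^{(j)}) = u^{(j)}$ (since it preserves normalization of states) and maps ray-extreme effects to ray-extreme effects. A permutation $P_{ij}$ between equivalent subsystems sends $u^{(i)}$ to $u^{(j)}$ and permutes the sets of ray-extreme effects. Consequently, the image of any sub-unit effect under a composition of local transformations and subsystem permutations remains a product effect with a single unit-effect component and all other components ray-extreme, hence is itself a sub-unit effect which trivially refines itself.

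For the backward direction, suppose the image of every sub-unit effect refines a sub-unit effect. By Lemma \ref{adjacency-preserving-lemma}, it suffices to prove that $\adjoint{T}$ is adjacency-preserving. Let $f, g$ be distinct composite ray-extreme effects adjacent on subsystem $i$, and define the $i$-sub-unit effect $E$ whose $i$th component is $u^{(i)}$ and whose $j$th component (for $j\neq i$) is the common value $f^{(j)} = g^{(j)}$. Since each local ray-extreme effect is proper, both $f$ and $g$ refine $E$. Because $\adjoint{T}$ is a positive linear map (it sends $\effects_+$ into $\effects_+$, being the adjoint of an allowed transformation), refinement is preserved: $\adjoint{T}(f), \adjoint{T}(g) \leq_{\effects_+} \adjoint{T}(E)$. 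By assumption, $\adjoint{T}(E)$ refines some sub-unit effect $E'$, so by transitivity both $\adjoint{T}(f)$ and $\adjoint{T}(g)$ refine $E'$. Since $T$ is reversible, $\adjoint{T}(f)$ and $\adjoint{T}(g)$ are distinct composite ray-extreme effects, so Lemma \ref{leq-sub-unit-lemma} forces them to be adjacent at the subsystem on which $E'$ carries its unit effect. This establishes adjacency-preservation, and the result then follows from Lemma \ref{adjacency-preserving-lemma}.

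The main obstacle I anticipate is cleanly establishing the monotonicity of $\adjoint{T}$ with respect to $\leq_{\effects_+}$; once that is in place, the argument is almost a one-line chain of refinements combined with the two earlier lemmas. I would double-check one subtlety: the sub-unit effect $E'$ obtained from $\adjoint{T}(E)$ may have its unit-effect component on a subsystem different from $i$, but this is precisely what is needed and is compatible with Lemma \ref{leq-sub-unit-lemma}, which fixes the adjacency subsystem of the image pair to be wherever $E'$ carries its unit. No macro or environment beyond those used in the excerpt is required.
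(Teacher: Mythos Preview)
Your proposal is correct and follows essentially the same route as the paper's proof: the ``only if'' direction is handled by noting that local transformations and subsystem permutations send sub-unit effects to sub-unit effects, and the ``if'' direction proceeds by showing that any adjacent pair $f,g$ has images refining a common sub-unit effect and then invoking Lemma~\ref{leq-sub-unit-lemma} followed by Lemma~\ref{adjacency-preserving-lemma}. Your explicit appeal to positivity of $\adjoint{T}$ to justify monotonicity under $\leq_{\effects_+}$ is a welcome clarification of what the paper compresses into the phrase ``by linearity.''
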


\begin{proof}
For the ``if'' direction, consider two composite ray-extreme effects $e$ and $f$ which are adjacent on some subsystem $i$. Observe that there is a unique $i$-sub-unit effect $E$ which is refined by both $e$ and $f$. If $\adjoint{T}(E)$ refines a sub-unit effect $F$, then by linearity $\adjoint{T}(e)$ and $\adjoint{T}(f)$ also refine $F$. It follows from Lemma \ref{leq-sub-unit-lemma} that $\adjoint{T}(e)$ and $\adjoint{T}(f)$ are adjacent, hence by Lemma \ref{adjacency-preserving-lemma}, $T$ is trivial.

To prove the ``only if'' direction, observe that local reversible transformations and permutations of subsystems clearly map sub-unit effects to sub-unit effects, therefore any composition of them satisfies the desired condition.
\end{proof}

\section{Dichotomic systems} \label{dichotomic-section}

A consequence of Lemma \ref{leq-sub-unit-lemma} is that the local state space determines the ways in which a sub-unit effect can be written as a positive linear combination of composite ray-extreme effects. Transformations must respect these combinations, therefore Lemma \ref{image-sub-unit-lemma} suggests that the local state space also determines the form of reversible transformations. In this Section we consider \emph{dichotomic systems} in which for every local ray-extreme effect $e$, the effect $\bar{e} = u-e$ is also ray-extreme (note that for any theory,  $u-e$ is extreme whenever $e$ is - see e.g. Proposition 3.33 of \cite{pfister12}). This has a deep physical interpretation in terms of possible measurements. A \emph{fine-grained} measurement is one that consists solely of ray-extreme effects, and hence is maximally informative in the sense that none of its constituent effects $e$ can be replaced by two non-parallel effects $f$ and $g$ for which $f+g = e$ \cite{short10}. In dichotomic systems, all fine-grained measurements consist of just two outcomes; such systems may therefore be regarded as a fundamental unit of information in the theory.

It turns out that a surprisingly large range of operational models are represented by dichotomic state spaces. For example, it can be checked that many of the regular $d$-polytopes give rise to dichotomic state spaces. This follows from the result that an effect generates an extreme ray of $\effects_+$ as long as it has zero inner product with at least $d$ vertices of a $d$-polytope state space (see e.g. Theorem 2.16 of \cite{gale60}).  When $d=2$ the state space is a regular polygon, the non-local correlations of which have previously been studied in \cite{janotta11}; even-sided polygons are dichotomic, whereas odd-sided polygons are not. Of the five regular 3-polytopes, all except the 3-simplex (or tetrahedron) give rise to dichotomic state spaces; similarly for the six regular 4-polytopes, all except the 4-simplex do. For all $d>4$ there are exactly 3 regular $d$-polytopes: the $d$-simplex, the $d$-cube and the $d$-octoplex. The $d$-simplex represents a classical state space on $d$ outcomes, and is never dichotomic for $d>2$. The $d$-cube represents a Boxworld system with $d$ possible measurements and two possible outcomes, and is always dichotomic. The $d$-octoplex is the dual polytope to the $d$-cube and is also dichotomic.

Many non-polytopic systems are also dichotomic, for example qubit systems. In fact, any system for which the extreme effects are also ray-extreme is dichotomic, including all $d$-dimensional ball systems, i.e. whose state space is an embedding into $\mathbb{R}^{d+1}$ of the set of vectors $s \in \mathbb{R}^d$ for which $||s|| \leq 1$ \cite{muller11, pawlowski12}. However, quantum systems whose Hilbert space dimension is greater than 2 are not dichotomic, since subtracting a rank-one projector from the identity produces an effect that is extreme but not ray-extreme.

In the remainder of the Section we prove that all reversible transformations on a max tensor product of dichotomic systems satisfy the conditions of Lemma \ref{image-sub-unit-lemma}, therefore are trivial. For completeness, a separate proof of this result for the case of identical odd-sided polygon systems is included in the Appendix.

\begin{lemma}\label{tensor-product-lemma}
Suppose that $\{ x_i \}_{i=1}^{r}, \{ w_j \}_{j=1}^{s} \subset U$ and $\{ y_i \}_{i=1}^{r}, \{ z_j \}_{j=1}^{s} \subset W$ are sets of vectors satisfying:
\begin{equation}
\sum_{i=1}^{r} x_i \otimes y_i = \sum_{j=1}^{s} w_j \otimes z_j ,
\end{equation}
and that the set $\{ x_i \}_{i=1}^{r}$ is linearly independent. Then,
\begin{equation}
y_i \in \text{span}\left( \{ z_1, \ldots ,  z_{s} \} \right) .
\end{equation}
for $i = 1, \ldots , r$.
\end{lemma}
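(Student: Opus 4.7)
The plan is to exploit the universal property of the tensor product together with the linear independence of $\{x_i\}_{i=1}^{r}$ in order to ``extract'' the coefficient $y_i$ from each side of the equation.

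First I would use the linear independence of $\{x_i\}_{i=1}^{r}$ to produce a dual family of linear functionals: extend $\{x_i\}_{i=1}^{r}$ to a basis of $U$, and let $\phi_1, \ldots , \phi_r \in U^{*}$ be the unique linear functionals on $U$ satisfying $\phi_i(x_k) = \delta_{ik}$ for $1 \le i, k \le r$. Existence of such a dual set is precisely what linear independence of $\{x_i\}$ guarantees, so this is the step where the hypothesis gets used.

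Next, for each fixed $i$, I would consider the linear map $\Phi_i : U \otimes W \to W$ defined on elementary tensors by $\Phi_i(u \otimes w) = \phi_i(u) \, w$ and extended linearly; this is well-defined by the universal property of the tensor product. Applying $\Phi_i$ to both sides of the given identity $\sum_k x_k \otimes y_k = \sum_j w_j \otimes z_j$ yields
\begin{equation}
y_i \;=\; \sum_{k=1}^{r} \phi_i(x_k)\, y_k \;=\; \Phi_i\!\left(\sum_k x_k \otimes y_k\right) \;=\; \Phi_i\!\left(\sum_j w_j \otimes z_j\right) \;=\; \sum_{j=1}^{s} \phi_i(w_j)\, z_j ,
\end{equation}
which exhibits $y_i$ explicitly as a linear combination of the $z_j$, giving the desired conclusion.

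There is no real obstacle here; the only thing to be careful about is the well-definedness of $\Phi_i$ on the tensor product (as opposed to just on elementary tensors), which is immediate from the universal property but worth an explicit remark. Note also that the argument is entirely symmetric in the two sides: although the statement only assumes linear independence of the $\{x_i\}$, the same construction would give the dual conclusion if one assumed independence of $\{w_j\}$ instead. No additional structure on $U$ or $W$ beyond their being vector spaces is required.
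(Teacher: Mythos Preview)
Your argument is correct. The paper itself does not give a proof of this lemma at all; it simply cites Lemma~1 of Marcus--Moyls \cite{marcus59}. Your approach via dual functionals and the contraction map $\Phi_i : U \otimes W \to W$, $u \otimes w \mapsto \phi_i(u)\, w$, is the standard elementary proof and is exactly what one would reconstruct from that reference. The only cosmetic remark is that in the paper's setting $U$ and $W$ are finite-dimensional inner product spaces, so the existence of the dual family $\{\phi_i\}$ is immediate and one could equally phrase the contraction as $u \otimes w \mapsto \langle a_i, u\rangle\, w$ for suitable $a_i \in U$; but your basis-extension formulation is more general and works just as well.
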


\begin{proof}
See Lemma 1 of \cite{marcus59}.
\end{proof}

\begin{lemma} \label{ray-span-lemma}
Suppose that $e, f$ and $g$ are ray-extreme effects in a (local or composite) system, with $f \neq g$. If $e \in \text{span}\left( \{ f, g \}\right)$, then either $e = f$ or $e = g$.
\end{lemma}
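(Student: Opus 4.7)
The plan is to write $e = \alpha f + \beta g$ for scalars $\alpha, \beta \in \mathbb{R}$ and carry out a case analysis on the signs of $\alpha$ and $\beta$, using two tools: the pointedness of $\effects_+$, and the defining properties of ray-extremality for each of $e, f, g$.

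First I would dispose of the case $\alpha, \beta < 0$: this gives $e + (-\alpha)f + (-\beta)g = 0$, a nontrivial nonnegative combination of nonzero vectors in the pointed cone $\effects_+$ summing to zero, contradicting pointedness.

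Next I would handle the mixed-sign case, say $\alpha \geq 0$ and $\beta < 0$ (the other mixed case being symmetric). Rearranging gives $\alpha f = e + (-\beta) g$ with $-\beta > 0$. If $\alpha = 0$ this reads $e = \beta g$, again violating pointedness; so $\alpha > 0$. Now $\alpha f$ lies on the extreme ray of $f$, and it is written as a sum of two nonzero members of $\effects_+$, namely $e$ and $(-\beta) g$. The extreme-ray property forces both summands to be nonnegative scalar multiples of $f$. In particular $g$ would have to generate the same ray as $f$, contradicting the fact that $f$ and $g$ are distinct ray-extreme effects (two ray-extreme effects generating the same ray must coincide, by the argument in the next paragraph).

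That leaves the case $\alpha, \beta \geq 0$. If both are strictly positive, then $e = \alpha f + \beta g$ decomposes $e$ as a sum of two nonzero elements of $\effects_+$, and the extreme-ray property of $e$ would force $f$ and $g$ both to lie on the ray of $e$, again contradicting $f \neq g$. Hence one of $\alpha, \beta$ vanishes; say $\alpha = 0$, so $e = \beta g$ with $\beta > 0$. To upgrade proportionality to equality, I would invoke that $e$ and $g$ are both extreme points of the convex set $\effects$, and that $0 \in \effects$ (since $0 \leq_{\effects_+} u$): if $\beta < 1$ then $e = \beta g + (1-\beta)\cdot 0$ is a nontrivial convex combination contradicting extremality of $e$, and if $\beta > 1$ then symmetrically $g = (1/\beta) e + (1 - 1/\beta)\cdot 0$ contradicts extremality of $g$; so $\beta = 1$ and $e = g$. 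The case $\beta = 0$ similarly yields $e = f$.

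The main subtlety, and the step I expect to be most delicate to state cleanly, is the upgrade from ``$e$ lies on the extreme ray of $f$ (or $g$)'' to ``$e$ equals $f$ (or $g$)''; the extreme-ray property only delivers proportionality, and the conversion to actual equality relies essentially on the extreme-point property of ray-extreme effects together with the presence of $0$ in $\effects$. Everything else is a clean bookkeeping exercise in conic linear algebra.
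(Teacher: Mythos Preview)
Your proposal is correct and follows essentially the same route as the paper's proof: write $e=\alpha f+\beta g$ and do a sign analysis on $(\alpha,\beta)$, using ray-extremality to eliminate the impossible cases. The paper's version is terser---it rules out both coefficients negative via a state with $\innerproduct{f}{s}>0$ rather than pointedness, and it compresses your mixed-sign and both-nonnegative cases into a single line ``$e+(-\alpha)f=\beta g$ implies $\alpha=0$ or $\beta=0$''---but the underlying mechanism is the same. Your explicit treatment of the upgrade from proportionality to equality (via $0\in\effects$ and the extreme-point property) is a genuine improvement in rigor over the paper, which silently identifies ``$e=\beta g$'' with ``$e=g$''.
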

\begin{proof}
Let $e = \alpha f + \beta g$. As $e$ is ray-extreme, we cannot have both $\alpha > 0$ and $\beta > 0$. We also cannot have both $\alpha < 0$ and $\beta < 0$, otherwise any state $s$ for which $\innerproduct{f}{s} > 0$ gives $\innerproduct{e}{s} < 0$. Without loss of generality, assume $\alpha \leq 0$ and $\beta \geq 0$. Then $e + (-\alpha) f = \beta g$, implying that either $\alpha = 0$ (in which case $e=g$) or $\beta = 0$ (in which case $e=f$). 
\end{proof}

\begin{theorem} \label{dichotomic-theorem}
All reversible transformations on a max tensor product of non-classical, dichotomic systems are trivial.
\end{theorem}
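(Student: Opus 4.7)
The plan is to invoke Lemma \ref{image-sub-unit-lemma}, which reduces the theorem to showing that for every sub-unit effect $E$, the image $\adjoint{T}(E)$ refines a sub-unit effect. I would fix an $i$-sub-unit effect $E = e^{(1)} \otimes \cdots \otimes u^{(i)} \otimes \cdots \otimes e^{(N)}$ and exploit the dichotomy of subsystem $i$: every ray-extreme $a \in \effects^{(i)}$ satisfies $u^{(i)} = a + \bar a$ with $\bar a$ also ray-extreme, giving the decomposition $E = \alpha(a) + \beta(a)$ into composite ray-extreme effects adjacent on subsystem $i$. Applying $\adjoint{T}$ yields $\adjoint{T}(E) = f(a) + g(a)$, where $f(a) := \adjoint{T}(\alpha(a))$ and $g(a) := \adjoint{T}(\beta(a))$ are composite ray-extreme (since $\adjoint{T}$ permutes such effects). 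The target is to show that $f(a)$ and $g(a)$ differ on exactly one subsystem $k$, with $f(a)^{(k)}$ and $g(a)^{(k)}$ being dichotomic partners; then $\adjoint{T}(E)$ is itself a $k$-sub-unit effect, certainly refining one.

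Since the local system is non-classical and dichotomic, I can choose ray-extreme effects $a, a'$ on subsystem $i$ with $a' \notin \{a, \bar a\}$, producing genuinely distinct decompositions of $E$. The resulting identity $f(a) + g(a) = f(a') + g(a') = \adjoint{T}(E)$ then involves four pairwise distinct composite ray-extreme effects, by bijectivity of $\adjoint{T}$. I would then expand each as a tensor product of its local ray-extreme components and, by contracting both sides of this identity with a pure state on a chosen subsystem $j$, obtain a relation in the remaining tensor factors of exactly the form required by Lemma \ref{tensor-product-lemma}. Applying that lemma with the linearly independent ``row'' coming from the components on subsystem $j$ forces each component of $f(a)$ (and of $g(a)$) on each other subsystem $j' \neq j$ into the span of the corresponding components of $f(a'), g(a')$; Lemma \ref{ray-span-lemma} then forces these ray-extreme components to actually coincide with one of those of $f(a')$ or $g(a')$.

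Tracking these componentwise coincidences across all subsystems and invoking the pairwise distinctness of $f(a), g(a), f(a'), g(a')$, there must be exactly one subsystem $k$ where $f(a)$ and $g(a)$ disagree; on every other subsystem they share a common component, and the dichotomy constraint forces $f(a)^{(k)} + g(a)^{(k)} = u^{(k)}$, so that $\adjoint{T}(E)$ is a $k$-sub-unit effect. The principal obstacle I anticipate lies in the combinatorial case analysis underlying this final step: several local components may degenerately coincide on multiple subsystems at once, collapsing the dichotomy in Lemma \ref{ray-span-lemma}, and the identification of a single subsystem $k$ consistent across different choices of $a$ must be verified carefully. Removing these degeneracies should be possible by appealing to further decompositions arising from additional ray-extreme effects—and this is precisely where the non-classical hypothesis is essential, since it guarantees more than one dichotomic pair and thus enough independent decompositions to eliminate inconsistent patterns.
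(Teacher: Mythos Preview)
Your proposal is essentially the paper's proof: write $E$ two ways as a sum of two composite ray-extreme effects (using dichotomy and non-classicality), push through $\adjoint{T}$, use Lemmas~\ref{tensor-product-lemma} and~\ref{ray-span-lemma} to force the two summands to agree off a single subsystem, then invoke Lemma~\ref{image-sub-unit-lemma}. Two small corrections: you only need (and only get) $f(a)^{(k)}+g(a)^{(k)} \leq_{\effects_+} u^{(k)}$ from properness of $\adjoint{T}(E)$, not the equality you assert; and the ``principal obstacle'' you anticipate is in the paper a clean trichotomy on how many subsystems the two summands of $\adjoint{T}(E)$ differ on (none, at least two, exactly one), with the middle case handled by splitting the tensor factors as $V^{\Omega_1}\otimes V^{\Omega_2}$ and applying Lemma~\ref{tensor-product-lemma} directly to that bipartition---no contraction against a state is needed at that stage, and indeed contracting out subsystem $j$ would remove precisely the components you want to use as the linearly independent ``row''.
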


\begin{proof}
Let $E$ be a sub-unit effect in the composite system, and note that $E$ can be written in at least two distinct ways as a sum of two composite ray-extreme effects. As $\adjoint{T}$ is linear and permutes the set of composite ray-extreme effects, $\adjoint{T}(E)$ can also be written in at least two distinct ways as a sum of two composite ray-extreme effects:
\begin{align} \label{decomp-eqn}
\adjoint{T}(E) &= e^{(1)} \otimes \cdots \otimes e^{(N)} + f^{(1)} \otimes \cdots \otimes f^{(N)} \nonumber \\
&= g^{(1)} \otimes \cdots \otimes g^{(N)} + h^{(1)} \otimes \cdots \otimes h^{(N)} .
\end{align}
We claim that there exists a sub-unit effect $F$ for which $\adjoint{T}(E) \leq_{\effects_+} F$. It then follows from Lemma \ref{image-sub-unit-lemma} that $T$ is trivial. 

Firstly, suppose that $e^{(i)}=f^{(i)}$ for $i = 1, \ldots , N$. Then there exists a pure product state whose inner product with $\adjoint{T}(E)$ is 2, contradicting the fact that $\adjoint{T}$ maps proper effects to proper effects. Secondly, suppose that $e^{(i)} \neq f^{(i)}$ and  $e^{(j)} \neq f^{(j)}$, where $1 \leq i < j \leq N$. Let $\Omega_1 = \{1 , \ldots , i\}$ and $ \Omega_2 = \{ i+1 , \ldots , N\}$, and define 
\begin{equation}
\setlength\arraycolsep{20pt}
\begin{array}{cccc}
x_1 = e^{\Omega_1}, & y_1 = e^{\Omega_2}, & x_2 = f^{\Omega_1}, & y_2 = f^{\Omega_2} \\
w_1 = g^{\Omega_1}, & z_1 = g^{\Omega_2}, & w_2 = h^{\Omega_1}, & z_2 = z^{\Omega_2}, \\
\end{array}
\end{equation}
all of which are ray-extreme effects in one of the (possibly composite) max tensor product systems $U = V^{\Omega_1}$ or $W = V^{\Omega_2}$. Equation \eqref{decomp-eqn} then reduces to
\begin{equation}
x_1 \otimes y_1 + x_2 \otimes y_2 = w_1 \otimes z_1 + w_2 \otimes z_2.
\end{equation}
Note that by construction $x_1 \neq x_2$ and $y_1 \neq y_2$. As $x_1$ and $x_2$ are linearly independent, it follows from Lemma \ref{tensor-product-lemma} that $y_1 \in \text{span} \left( \{ z_1, z_2 \} \right)$, hence by Lemma \ref{ray-span-lemma} we may assume without loss of generality that $y_1 = z_1$. Similarly, $y_2 \in \text{span} \left( \{ z_1, z_2 \}  \right)$, and since $y_1 \neq y_2$, we have that $y_2 = z_2$. Thus Equation \eqref{decomp-eqn} reduces to
\begin{equation} \label{reduced-decomp-eqn}
x_1 \otimes y_1 + x_2 \otimes y_2 = w_1 \otimes y_1 + w_2 \otimes y_2.
\end{equation}
Let $b \in W$ be any vector such that $\innerproduct{y_1}{b} = 1$ and $\innerproduct{y_2}{b} = 0$. Then for arbitrary $a \in U$, by taking the inner product of both sides of Equation \eqref{reduced-decomp-eqn} with $a \otimes b$, we find that $\innerproduct{x_1}{a} = \innerproduct{w_1}{a}$. Since $a$ is arbitrary, it must be that $x_1 = w_1$, and by similar reasoning $x_2 = w_2$. This contradicts the fact that the two ways of writing $\adjoint{T}(E)$ in Equation \eqref{decomp-eqn} are distinct.

The only remaining possibility is that $e^{(i)} \neq f^{(i)}$ for exactly one value of $i$. In this case $\adjoint{T}(E)$ decomposes as
\begin{align}
\adjoint{T}(E) &= e^{(1)} \otimes \cdots \otimes \left[ e^{(i)} + f^{(i)} \right] \otimes \cdots \otimes e^{(N)} \nonumber\\
& \leq_+ e^{(1)} \otimes \cdots \otimes u^{(i)} \otimes \cdots \otimes e^{(N)},
\end{align}
where the inequality follows since, if $\adjoint{T}(E)$ is a proper effect, then $e^{(i)} + f^{(i)} \in \effects^{(i)}$. 

Having shown that the image of a sub-unit effect refines a sub-unit effect, it follows from Lemma \ref{image-sub-unit-lemma} that $T$ is trivial.
\end{proof}

\section{Reducibility} \label{reducibility-section}

In this Section we explore how reversibility relates to the mathematical property of reducibility. A closed, pointed cone $K \subseteq V$ with a set $E$ of extreme rays is said to be \emph{reducible} if there exists a decomposition $V = V_1 \oplus V_2$ such that each extreme ray of $K$ lies either in $V_1$ or $V_2$, i.e. $E = E_1 \cup E_2$ where $E_1 = E \cap V_1$ and $E_2 = E \cap V_2$. If $K$ is a reducible cone, we write $K = K_1 \oplus K_2$, where $K_i$ is the cone in $V_i$ generated by $E_i$ for $i=1,2$. We say that a system $V$ with state space $\states$ is reducible if the corresponding state cone $\states_+$ is a reducible cone. Since the dual cone of a closed, generating, reducible cone is also reducible (for example, see Theorem 2.1 of \cite{barker75}), we may equivalently say that a system is reducible if its effect cone $\effects_+$ is reducible.

\emph{Examples of irreducible systems.} Polygon state spaces whose number of pure states $n$ exceeds 3 are irreducible. To see this for $n>4$ note that either $E_1$ or $E_2$ must contain at least 3 extreme rays of $\states_+$, and any 3 such vectors span $V$. A similar argument also shows that quantum systems are irreducible in any dimension. Dichotomic systems (including the $n=4$ polygon) are also irreducible. To see this, let $e\in E_1$ and $f \in E_2$; no matter how one assigns $\overline{e} = u - e$ and $\overline{f} = u - f$ to $E_1$ and $E_2$, a contradiction may easily be derived from the equality $e + \overline{e} = f + \overline{f}$. A similar argument also shows that all non-classical Boxworld systems are irreducible.

\emph{Examples of reducible systems.} Classical systems are represented by state cones whose extreme rays form a basis of $V$, and so are clearly reducible. A simple, non-classical, reducible system is the \emph{squashed g-trit} \cite{alsafi15}, a modification of a Boxworld system with 2 measurement choices and 3 outcomes, with the added constraint on the state space that the first outcome of both measurements must have the same probability. Such a system has extreme rays $E = \{X, Y_{0}, Y_{1}, Z_{0}, Z_{1}\}$ such that $\{u, X, Y_{0}, Z_{0}\}$ forms a basis of $V$ and $u = X + Y_{0} + Y_{1} = X + Z_{0} + Z_{1}$. Splitting the extreme rays into $E_1 = \{X\}$, $E_2 = E \setminus E_1$ then gives a valid decomposition of the effect cone. The corresponding set of normalized states then forms a square pyramid (Fig. \ref{squashed-g-trit-figure}).

\begin{figure}[t] 
\centering
\includegraphics[scale=0.4]{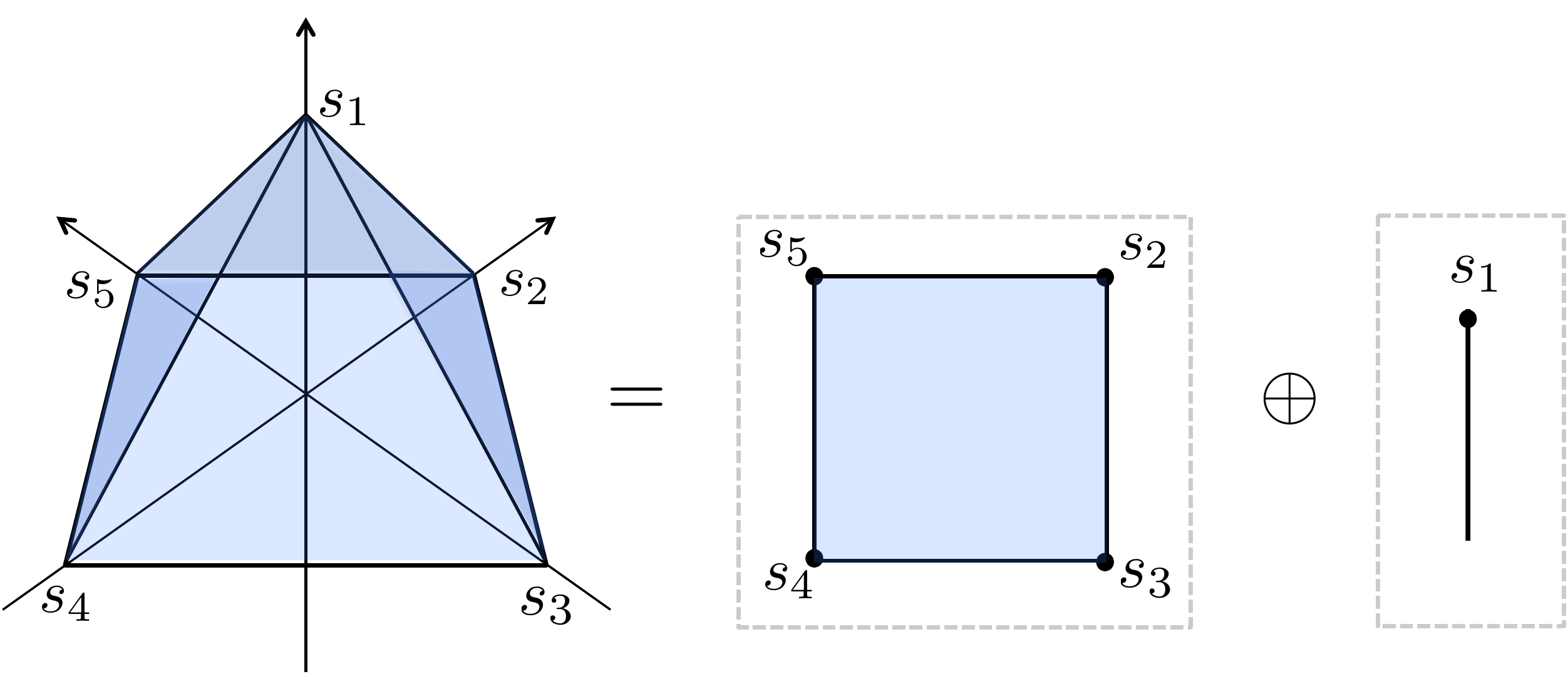}
\caption{State space of the squashed g-trit: State $s_1$ lies on the vertical axis, whilst the remaining pure states lie in the horizontal plane, forming a square pyramid. The pyramid is then embedded into the hyperplane in $\mathbb{R}^4$ which has inner product 1 with the unit effect $u$.}
\label{squashed-g-trit-figure}
\end{figure}

Reducible systems can be seen as carrying classical information, even if they are not in fact classical systems. Intuitively speaking, this classical information refers to whether the state of the system lies in $K_1$ or in $K_2$. By conditioning transformations of another subsystem on this pseudo-classical structure, non-trivial reversible transformations can be constructed in the max tensor product of composite systems. This type of transformation can be viewed as a generalization of the classical CNOT operation to more general systems.

\begin{theorem} \label{reducibility-theorem}
Let $V$ be the max tensor product of systems $V^{(1)}, \ldots , V^{(n)}$. If there exists some $i$ such that $V^{(i)}$ is a reducible system, and some $j \neq i$ such that the system $V^{(j)}$ has at least one local reversible transformation, then there exist non-trivial reversible transformations on $V$.
\end{theorem}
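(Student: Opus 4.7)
My plan is to construct an explicit ``controlled-$T^{(j)}$'' transformation $C$ on $V$ that generates classical correlation between the otherwise-independent subsystems $i$ and $j$, and then invoke the remark following the definition of trivial transformations --- that trivial transformations map product states to product states and so cannot generate correlations between uncorrelated systems --- to conclude that $C$ is non-trivial. Intuitively, reducibility of $V^{(i)}$ supplies a two-outcome classical register telling us which component of the direct-sum decomposition the state of $V^{(i)}$ lies in, and $C$ acts by $T^{(j)}$ on $V^{(j)}$ precisely when this register reads ``$2$''. Without loss of generality I take $T^{(j)}$ to be non-identity, since otherwise the statement would be vacuous.

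First, I exploit the reducibility: $V^{(i)} = V^{(i)}_1 \oplus V^{(i)}_2$ dualises (via Barker's Theorem 2.1, as cited) to $\effects^{(i)}_+ = \effects^{(i)}_{+,1} \oplus \effects^{(i)}_{+,2}$, so every $e^{(i)} \in \effects^{(i)}_+$ has a unique decomposition $e^{(i)} = e^{(i)}_1 + e^{(i)}_2$ with $e^{(i)}_k \in \effects^{(i)}_{+,k}$; in particular $u^{(i)} = u^{(i)}_1 + u^{(i)}_2$. I define $\adjoint{C}$ on product effects $e = \bigotimes_k e^{(k)}$ by
\[
\adjoint{C}(e) = \Bigl(\bigotimes_{k \notin \{i,j\}} e^{(k)}\Bigr) \otimes e^{(i)}_1 \otimes e^{(j)} + \Bigl(\bigotimes_{k \notin \{i,j\}} e^{(k)}\Bigr) \otimes e^{(i)}_2 \otimes \adjoint{[T^{(j)}]}(e^{(j)}),
\]
and extend linearly. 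Each summand is itself a product effect, so $\adjoint{C}$ preserves $\effects_+$; and $\adjoint{C}(u) = u$, using $\adjoint{[T^{(j)}]}(u^{(j)}) = u^{(j)}$ (since $T^{(j)}$ preserves normalization) together with $u^{(i)}_1 + u^{(i)}_2 = u^{(i)}$. Reversibility is witnessed by the analogous transformation built with $[T^{(j)}]^{-1}$ in place of $T^{(j)}$; composing the two recovers the identity because $(e^{(i)}_1)_1 = e^{(i)}_1$, $(e^{(i)}_1)_2 = 0$ (and symmetrically for $e^{(i)}_2$), so the ``wrong-branch'' cross-terms vanish.

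To establish non-triviality, I pick pure states $s^{(i)}_1 \in \states^{(i)}_{+,1}$ and $s^{(i)}_2 \in \states^{(i)}_{+,2}$ (both exist because $\states^{(i)}_+$ must contain extreme rays in each component for the decomposition to be non-trivial) and a pure $s^{(j)}$ with $T^{(j)}(s^{(j)}) \neq s^{(j)}$. Feeding $C$ the uncorrelated product state $\tfrac{1}{2}(s^{(i)}_1 + s^{(i)}_2) \otimes s^{(j)} \otimes \bigotimes_{k \notin \{i,j\}} s^{(k)}$ yields $\tfrac{1}{2}\bigl(s^{(i)}_1 \otimes s^{(j)} + s^{(i)}_2 \otimes T^{(j)}(s^{(j)})\bigr) \otimes \bigotimes_{k \notin \{i,j\}} s^{(k)}$. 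A short calculation shows that this state differs from the product of its $i$- and $j$-marginals by the non-zero term $\tfrac{1}{4}(s^{(i)}_1 - s^{(i)}_2) \otimes (s^{(j)} - T^{(j)}(s^{(j)}))$ on $V^{(i)} \otimes V^{(j)}$, so it is genuinely correlated across $(i,j)$. Since trivial transformations cannot generate such correlation, $C$ is non-trivial.

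The main obstacle I anticipate is justifying that each summand of $\adjoint{C}(e)$ really is a valid product effect in $\effects_+$; this is exactly where the dual-cone decomposition $\effects^{(i)}_+ = \effects^{(i)}_{+,1} \oplus \effects^{(i)}_{+,2}$ is essential, as it guarantees each $e^{(i)}_k$ is itself an allowed effect on $V^{(i)}$. Once that point is secured, the remaining verifications --- linearity, unit-preservation, invertibility, and the correlation computation --- are routine.
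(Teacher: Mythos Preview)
Your construction is the same controlled-$T^{(j)}$ gate that the paper builds, and your checks that $C$ is allowed and reversible mirror the paper's (you phrase the control via the linear projections $e^{(i)}\mapsto e^{(i)}_k$ rather than via a choice of basis in each summand, but these are equivalent descriptions of the same linear map). The genuine difference is in the non-triviality step. The paper exhibits two adjacent composite ray-extreme effects---one with first component in $K^{(1)}_1$, the other in $K^{(1)}_2$---whose images under $\adjoint{T}$ differ on \emph{two} subsystems, and then appeals to Lemma~\ref{adjacency-preserving-lemma}. You instead work on the state side, feeding $C$ an explicit product state and showing the output is correlated. Your route is operationally more transparent (it directly exhibits the correlation-generation that motivates the theorem), while the paper's route stays entirely on the effect side and thereby avoids any appeal to how $C$ acts on states.

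That last point matters, because your state-side computation hides a step. When you assert $C\bigl(s^{(i)}_1\otimes s^{(j)}\otimes\cdots\bigr)=s^{(i)}_1\otimes s^{(j)}\otimes\cdots$, you are implicitly using $\innerproduct{e^{(i)}_2}{s^{(i)}_1}=0$ for every $e^{(i)}_2\in V^{(i)}_2$, i.e.\ that $\states^{(i)}_{+,1}\subset (V^{(i)}_2)^\perp$. This is true, but it is precisely the content of how duality interacts with a direct-sum decomposition of cones: if $\effects^{(i)}_+=L_1\oplus L_2$ with $L_k$ spanning $V^{(i)}_k$, then $\states^{(i)}_+$ splits along $(V^{(i)}_2)^\perp\oplus(V^{(i)}_1)^\perp$, \emph{not} along $V^{(i)}_1\oplus V^{(i)}_2$, and the two agree only when the decomposition is orthogonal with respect to the given inner product. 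Your write-up uses the same symbols $V^{(i)}_k$ for both the effect-side and state-side decompositions, which elides this. Either insert a line establishing $\states^{(i)}_{+,1}\subset(V^{(i)}_2)^\perp$ (a short argument: any $s$ in this face satisfies $\innerproduct{e}{s}=0$ for all $e\in L_2$, hence for all $e\in V^{(i)}_2$), or switch to the paper's effect-side adjacency argument, which sidesteps the issue entirely.
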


\begin{proof}
Suppose without loss of generality that $V^{(1)}$ is reducible, i.e. $\effects^{(1)}_+ = K_1^{(1)} \oplus K_2^{(1)}$ with $K_i^{(1)} \subseteq V_i^{(1)}$, and that $T^{(2)} \neq \mathbb{I}$ is a local reversible transformation on subsystem $V^{(2)}$ with inverse $S^{(2)}$. For $i=1,2$, fix a basis $\basis^{(1)}_i$ of $V_i^{(1)}$ consisting of ray-extreme effects. For the remaining subsystems $V^{(j)}$, $j=2, \ldots , N$, fix a basis $\basis^{(j)}$ consisting similarly of ray-extreme effects. Define $\adjoint{T}$ to be the linear extension of the following action on tensor products of basis vectors:
\begin{align}
\adjoint{T} ( &e^{(1)} \otimes e^{(2)} \otimes \cdots \otimes e^{(N)} ) = \nonumber \\
&\left\{
\begin{array}{ll}
e^{(1)} \otimes \adjoint{ \left[ T^{(2)} \right] } (e^{(2)}) \otimes \cdots \otimes e^{(N)} & \text{if } e^{(1)} \in \basis^{(1)}_1 \\
e^{(1)} \otimes e^{(2)} \otimes \cdots \otimes e^{(N)} & \text{if } e^{(1)} \in \basis^{(1)}_2
\end{array}
\right.
\end{align}
It remains to show that $T$ is reversible, allowed, and non-trivial. To see that it is reversible, observe that we may construct $\left[\adjoint{T} \right]^{-1}$ by mapping $e^{(2)}$ to $\left[\adjoint{S} \right]^{(2)}(e^{(2)})$, conditional on $e^{(1)} \in \basis^{(1)}_1$.

To show that $T$ is allowed, it is sufficient to show that $\adjoint{T}$ permutes the set of composite ray-extreme effects and leaves $u$ invariant. Let $f$ be a composite ray-extreme effect, and note that $f^{(1)}$ lies in the span of either $\basis^{(1)}_1$ or $\basis^{(1)}_2$, but not both. The fact that $\adjoint{T}(f)$ is also a composite ray-extreme effect follows from expanding $f$ in the tensor-product basis and applying the linearity of $\adjoint{T}$ and $\adjoint{ \left[ T^{(2)} \right] }$. 

Now let $u^{(1)} = \sum_i e^{(1)}_i + \sum_j \tilde{e}^{(1)}_j$, where $e^{(1)}_i \in \basis^{(1)}_1$ and $\tilde{e}^{(1)}_j \in \basis^{(1)}_2$. Then
\begin{align}
\adjoint{T} (u \otimes \cdots \otimes u ) = &\sum_i \adjoint{T} (e^{(1)}_i \otimes u^{(2)} \otimes \cdots \otimes u^{(N)}) \nonumber \\
&+ \sum_j \adjoint{T} (\tilde{e}^{(1)}_j \otimes u^{(2)} \otimes \cdots \otimes u^{(N)}) \label{t2-unit} \\
= &u^{(1)} \otimes \cdots \otimes u^{(N)},
\end{align}
where we have used the fact that $\adjoint{\left[T^{(2)} \right]}(u^{(2)}) = u^{(2)}$. Therefore $\adjoint{T}$ preserves the unit effect $u$.

Finally, we argue that $T$ is non-trivial by showing that $\adjoint{T}$ is not adjacency-preserving. Let $e^{(1)} \in \basis^{(1)}_1$, $f^{(1)} \in \basis^{(1)}_2$, and let $e^{(2)}$ be a ray-extreme effect on subsystem 2 which is not left invariant by $\adjoint{\left[T^{(2)} \right]}$. For $i \geq 3$ let $e^{(i)}$ be any ray-extreme effect. Then the adjacent effects  $f^{(1)}\otimes e^{(2)} \otimes \cdots \otimes e^{(N)}$ and $e^{(1)}\otimes e^{(2)} \otimes \cdots \otimes e^{(N)}$ are mapped to non-adjacent effects.
\end{proof}

Following the proof of Theorem \ref{reducibility-theorem}, one can construct non-trivial reversible transformations of the max tensor product of two squashed g-trit systems. For example, there is a transformation which acts in the following way on composite ray-extreme effects:
\begin{align}
A \otimes X &\mapsto A \otimes X , \nonumber \\
A \otimes B_i &\mapsto 
\left\{
	\begin{array}{ll}
		A \otimes B_{i\oplus 1} & \mbox{if } A = X \\
		A \otimes B_{i} & \mbox{otherwise} 
	\end{array}
\right.
\end{align}
where $A$ is any ray-extreme effect on subsystem 1 and $B$ may be replaced by either $Y$ or $Z$.

Theorem \ref{reducibility-theorem} demonstrates that the presence of a reducible subsystem is a sufficient condition for a maximally non-local system to admit non-trivial reversible transformations. We conjecture, based on Theorem \ref{dichotomic-theorem} and the results of \cite{alsafi14}, that this is also a necessary condition, i.e. that the existence of non-trivial reversible transformations is equivalent to the reducibility of one or more subsystems.

\begin{conjecture} \label{irreducibility-conjecture}
All reversible transformations in a max tensor product of irreducible systems are trivial.
\end{conjecture}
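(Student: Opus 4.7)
The plan is to extend the proof of Theorem \ref{dichotomic-theorem} by replacing its dichotomic analysis of two-term decompositions with an argument that handles arbitrary fine-grained measurements. As before, by Lemma \ref{image-sub-unit-lemma} it suffices to show that for every sub-unit effect $E = e^{(1)} \otimes \cdots \otimes u^{(i)} \otimes \cdots \otimes e^{(N)}$, the image $A := \adjoint{T}(E)$ refines a sub-unit effect. Every fine-grained measurement $\{p_k^{(i)}\}_{k=1}^{r}$ on subsystem $i$ gives a decomposition $u^{(i)} = \sum_k p_k^{(i)}$ and hence, via linearity of $\adjoint{T}$, a decomposition
\[
A = \sum_{k=1}^{r} g_k^{(1)} \otimes \cdots \otimes g_k^{(N)}
\]
into distinct composite ray-extreme effects. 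The conjecture reduces to showing that across all such decompositions there exists a single output subsystem $j$ on which the summands vary, with the remaining $N-1$ tensor factors fixed.

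First I would analyse a single decomposition, comparing summands pairwise via Lemmas \ref{tensor-product-lemma} and \ref{ray-span-lemma}, as in Theorem \ref{dichotomic-theorem}. If two summands $g_k^{(1)} \otimes \cdots \otimes g_k^{(N)}$ and $g_l^{(1)} \otimes \cdots \otimes g_l^{(N)}$ disagreed on a nontrivial bipartite split $(\Omega, \Omega^c)$ of output subsystems, the witness-vector argument from the second case of Theorem \ref{dichotomic-theorem} should combine with the proper-effect condition on $A$ to force the summands to coincide, contradicting distinctness. This would show that any two summands of any single decomposition differ on exactly one subsystem, and a combinatorial argument on the resulting discrepancy graph then isolates a distinguished output subsystem $j$ for that decomposition.

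Second, I would show that $j$ is independent of the chosen fine-grained measurement. If two measurements produced decompositions with distinguished subsystems $j_1 \neq j_2$, then $A$ would admit two representations as a pure tensor of ray-extreme effects: in one, the $j_1$-factor is a single ray-extreme effect, while in the other it is a sum of distinct ray-extreme effects. Contracting the equality against a witness vector that annihilates all factors outside $V^{(j_1)}$ (as at the end of the second case of Theorem \ref{dichotomic-theorem}) forces the sum of distinct ray-extreme effects to be proportional to a single ray-extreme effect on $V^{(j_1)}$, which is impossible since distinct ray-extreme effects generate distinct extreme rays of $\effects_+^{(j_1)}$. Thus $j$ is well-defined, and
\[
A = g^{(1)} \otimes \cdots \otimes \Bigl(\sum_k h_k^{(j)}\Bigr) \otimes \cdots \otimes g^{(N)} \leq_{\effects_+} g^{(1)} \otimes \cdots \otimes u^{(j)} \otimes \cdots \otimes g^{(N)}
\]
because $A$ is a proper effect, so $A$ refines a $j$-sub-unit as required.

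The principal obstacle, and the reason the statement remains a conjecture, lies in the first step: ruling out the possibility that within a single decomposition the summands disagree on two or more subsystems. In the dichotomic case this is essentially automatic because $r=2$, so the whole analysis collapses to a single pair of terms already handled in Theorem \ref{dichotomic-theorem}. For $r>2$ and arbitrary irreducible systems the combinatorics is subtler, and it is precisely here that the irreducibility hypothesis must be invoked: if a decomposition of $A$ had summands differing on multiple subsystems, one would expect to partition the local ray-extreme effects $\{g_k^{(l)}\}_k$ appearing on some output subsystem $l$ into two families whose linear hulls are complementary subspaces of $V^{(l)}$, directly witnessing a decomposition $\effects_+^{(l)} = K_1 \oplus K_2$ contradicting the irreducibility of $V^{(l)}$. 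Making this partition precise in full generality, beyond the special two-outcome structure of dichotomic measurements, is the outstanding technical step that prevents the argument from closing.
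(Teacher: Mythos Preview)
The paper does not prove this statement: it is explicitly posed as an open conjecture, motivated by Theorem~\ref{dichotomic-theorem} and the Boxworld results of \cite{alsafi14}, with no proof or proof sketch given. So there is nothing to compare against, and your proposal should be assessed as a candidate strategy rather than as a reproduction of an existing argument. To your credit, you correctly identify that the argument does not close and locate the obstruction in step one.

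That said, your step one mischaracterises what the machinery from Theorem~\ref{dichotomic-theorem} actually delivers. The ``witness-vector'' argument there compares two \emph{separate} two-term decompositions of the same vector and shows that the terms must match up across the two decompositions. It does not, and cannot, tell you anything about how two distinct summands \emph{within a single} $r$-term decomposition relate to one another: there is no equation of the form $x_1\otimes y_1 + x_2\otimes y_2 = w_1\otimes z_1 + w_2\otimes z_2$ available when you isolate a pair $(g_k,g_l)$ from a longer sum. Lemmas~\ref{tensor-product-lemma} and~\ref{ray-span-lemma} likewise do not apply to a single pair in isolation; in particular, Lemma~\ref{ray-span-lemma} is a two-dimensional span statement that has no obvious analogue when $r>2$. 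So the assertion that ``the witness-vector argument \ldots\ should combine with the proper-effect condition on $A$ to force the summands to coincide'' is not just technically incomplete but points in the wrong direction: the relevant input is not a pairwise comparison but a global constraint on the whole decomposition.

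Your final paragraph correctly intuits that irreducibility must enter by obstructing a splitting of the local effects $\{g_k^{(l)}\}_k$ into complementary subspaces, and this is indeed the natural place to expect the hypothesis to bite. But turning that intuition into an argument would require a genuinely new idea beyond anything in the paper, which is precisely why the authors state it as a conjecture.
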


It may be observed that ``conditional'' transformations of the kind constructed in the proof of Theorem \ref{reducibility-theorem} can generate only classical correlations, rather than entanglement. This is because conditional transformations map pure product states to pure product states, and hence permute the set of classically correlated states that lie in the convex hull of the pure product states (analogous to separable states in quantum theory). However, it is conceivable that maximally non-local composite systems do allow for reversible entanglement generation, even if Conjecture \ref{irreducibility-conjecture} holds. This seems unlikely, given the highly restricted nature of reversible dynamics studied thus far, suggesting the following conjecture.

\begin{conjecture}
There is no reversible entanglement generation in maximally non-local systems.
\end{conjecture}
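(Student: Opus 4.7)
The plan is to establish the stronger claim that every reversible transformation $T$ on a maximally non-local composite system $V = V^{(1)} \otimes \cdots \otimes V^{(N)}$ maps pure product states to pure product states; since a pure entangled state is by definition not a pure product state, this would immediately preclude reversible entanglement generation.

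Let $\omega = \omega^{(1)} \otimes \cdots \otimes \omega^{(N)}$ be a pure product state and write $\sigma = T(\omega)$, which is pure by reversibility of $T$. For any composite ray-extreme effect $e = e^{(1)} \otimes \cdots \otimes e^{(N)}$, the adjoint $\adjoint{T}$ sends $e$ to another composite ray-extreme effect $g = g^{(1)} \otimes \cdots \otimes g^{(N)}$, so
\begin{equation}
\innerproduct{e}{\sigma} = \innerproduct{\adjoint{T}(e)}{\omega} = \prod_{i=1}^{N} \innerproduct{g^{(i)}}{\omega^{(i)}}.
\end{equation}
Thus the values of $\sigma$ on composite ray-extreme effects factorise, albeit through local witnesses $g^{(i)}$ that depend on the full input $e$. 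The goal is to deduce that this effect-wise factorisation forces $\sigma$ itself to factorise as a tensor product.

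For each subsystem $i$, the non-signaling condition already furnishes a unique reduced state $\sigma^{(i)}$ of $\sigma$, and the natural target is to show that $\sigma = \sigma^{(1)} \otimes \cdots \otimes \sigma^{(N)}$. I would proceed by contradiction: suppose that across some bipartition $\Omega_1 \sqcup \Omega_2 = \{1, \ldots, N\}$, $\sigma$ admits a non-trivial decomposition $\sigma = \sum_{k=1}^{r} \sigma_k^{\Omega_1} \otimes \sigma_k^{\Omega_2}$ with $r \geq 2$ and with $\{\sigma_k^{\Omega_1}\}$ linearly independent. Combining this with the expansion of $\sigma = T(\omega)$ induced by applying $\adjoint{T}$ to a product basis of ray-extreme effects yields two distinct expressions for the same vector as sums of products of ray-extreme effects, in direct analogy with Equation~\eqref{decomp-eqn} in the proof of Theorem~\ref{dichotomic-theorem}. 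Iteratively applying Lemma~\ref{tensor-product-lemma} and Lemma~\ref{ray-span-lemma} across the bipartition should then force a tensor-product identity that is incompatible with $r\geq 2$ and with $\omega$ being a pure product.

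The principal obstacle is that, unlike in Theorem~\ref{dichotomic-theorem}, $\adjoint{T}$ need not preserve arbitrary product effects, only tensor products of ray-extreme effects; the conditional transformations built in the proof of Theorem~\ref{reducibility-theorem} demonstrate this failure explicitly. Consequently the local witnesses $g^{(i)}(e)$ depend on all of $e$ rather than solely on $e^{(i)}$, so one cannot directly read off a local map certifying $\sigma$ as a product. Overcoming this will likely require either (i) an induction on $N$ reducing to the bipartite case, together with finer control of how $\adjoint{T}$ interacts with a tensor-product basis of ray-extreme effects, or (ii) a stronger use of the purity of $\sigma$ — for instance, the fact that pure states of the max tensor product are vertices of a compact convex set and therefore admit rigid descriptions via ``certifying'' product effects of unit value. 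Either route will almost certainly demand new structural results on the extreme rays of the max tensor product state cone beyond those developed in this article, which is why the statement is presented as a conjecture rather than a theorem.
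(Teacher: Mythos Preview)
The paper offers no proof of this statement: it is explicitly labelled a conjecture, and the Discussion section only sketches why one might hope it is true. So there is nothing to compare your argument against in the usual sense. You appear to be aware of this, since your final sentence concedes that new structural results would be needed and that this is ``why the statement is presented as a conjecture rather than a theorem.''

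That said, your proposal is well aligned with what the paper does say. Your opening reformulation --- that the conjecture is equivalent to showing reversible transformations permute pure product states --- is exactly the translation the authors make in Section~\ref{discussion-section}. Your route~(ii), exploiting purity of $\sigma$ via ``certifying'' product effects that evaluate to $1$, is essentially the Boxworld argument of~\cite{colbeck10} that the authors single out as the most promising line of attack. Your route~(i), attempting to recycle Lemmas~\ref{tensor-product-lemma} and~\ref{ray-span-lemma} on a bipartite splitting of $\sigma$, is not something the paper proposes, and the obstacle you identify there is genuine: the conditional transformations of Theorem~\ref{reducibility-theorem} show that $\adjoint{T}$ need not send general product effects to product effects, so the local witnesses $g^{(i)}$ really do depend on all of $e$, and the two-term rigidity argument from Theorem~\ref{dichotomic-theorem} does not transfer. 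In short, your write-up is an honest and accurate survey of the difficulty, not a proof --- and neither is anything in the paper.
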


\section{Discussion} \label{discussion-section}

We have investigated reversible dynamics in maximally non-local general probabilistic theories, demonstrating in Section \ref{dichotomic-section} that reversible dynamics are trivial in the case where all subsystems are non-classical and dichotomic. This represents a major step in the study of reversibility: we have not needed the assumption made in \cite{masanes11, colbeck10} that all subsystems are identical, and unlike \cite{colbeck10, alsafi14} our results apply to a range of theories, including many whose local state spaces have an infinite number of pure states. Note that we actually only require the weaker condition that on each subsystem the unit effect can be decomposed in two distinct ways as a sum of two ray-extreme effects. A crucial step in our proof was Lemma \ref{adjacency-preserving-lemma}, which extends the Hamming-distance based argument of \cite{colbeck10} to account for arbitrary local state spaces, and provides a very useful tool in future work on reversibility. A natural extension of this result would be to relax the dichotomy criterion, for example replacing it with the condition that every ray-extreme effect is contained in a fine-grained measurement (not necessarily of size 2).

This result also has implications for information-theoretic reconstructions of quantum theory. As discussed at the beginning of Section \ref{dichotomic-section}, the existence of dichotomic systems in one's theory can be regarded as a physically motivated postulate. Our result shows that in order for reversible interactions to be possible between dichotomic systems, they cannot combine under the max tensor product; however, very few natural theories are known whose global state spaces are restricted in this way. This line of argumentation is similar to that of \cite{masanes13}, in which the existence of a fundamental information unit is central to reconstructing quantum theory. It can also be seen as a stronger version of the result that the max tensor product is unattainable if one demands that every logical bit can be reversibly mapped to any other logical bit \cite{muller12}. In the case of systems whose state spaces are regular polytopes, we have shown that only classical systems have non-trivial reversible interactions in the max tensor product. This provides a strong operational reason for why polytopic systems do not describe nature, which may be compared with the result that classical systems are the only polytopic systems in which deterministic measurements do not disturb the state \cite{pfister12}.

In Section \ref{reducibility-section} we demonstrated that it is possible to reversibly generate correlations in maximally non-local theories whose subsystems are not classical, as long as at least one of them is reducible. Note that it is possible to maximally violate Bell inequalities in such theories \cite{alsafi15}, therefore our results do not imply that such systems are reversible. This result hints at an underlying relationship between local reducibility and global reversibility, and we have conjectured that non-trivial transformations can \emph{only} occur if at least one subsystem is reducible. One promising avenue towards a proof of this would be to exploit the intriguing fact that the reducibility of a cone $K$ is equivalent to the ray-extremality of all reversible matrices in $\Gamma(K)$, the cone of matrices which map $K$ into itself \cite{loewy75}.

We have further conjectured that reversible entanglement generation is impossible in maximally non-local theories, even if one or more of the subsystems are reducible. In technical terms, this is equivalent to the statement that reversible transformations of max tensor product systems permute the set of pure product states. In \cite{colbeck10} a very simple proof of this was given in the case of arbitrary Boxworld systems, by characterizing pure product states according to their inner products with ray-extreme effects and using the fact that the adjoint of a reversible transformation permutes the latter. It seems likely that this method of proof is extendable to more general systems, and possibly even to all maximally non-local theories; a first step may be to consider local systems for which each pure state $s$ has a unique effect $e$ for which $\innerproduct{e}{s} = 1$, such as $d$-dimensional balls. Combined with the previous conjecture, this would constitute a substantial characterization of reversible dynamics in maximally non-local theories.

As we have discussed already, a notable difference between quantum theory and the theories studied in this article is that quantum systems do not combine under the max tensor product. Therefore it would be interesting to explore whether these results can be extended to theories which are not maximally non-local. The biggest obstacle to this is that the set of composite ray-extreme effects no longer generate all the extreme rays of the effect cone, so that reversible transformations may well map these effects to other types of effects (as happens in quantum theory). Intuitively it seems that a great deal of symmetry is necessary for this to be possible; it may be that some minor assumptions on the local state space, for example that the number of pure states is finite, will suffice to extend our results to this case. This line of research will go some way towards settling the open question of whether quantum theory is essentially unique in its continuous reversibility, which would be a major result in our operational understanding of the universe. \vspace{2pt}

\textbf{Acknowledgements.}
SWA thanks Anthony Short, Jonathan Barrett and Marcus M\"uller for noting the importance of reducibility in the squashed g-trit example. JR thanks Lluis Masanes for helpful discussions. JR is supported by EPSRC.

\section*{Appendix}

In this appendix we extend the results of Section \ref{dichotomic-section} to prove that there are no non-trivial reversible  transformations in the max tensor product of identical, odd-sided polygon systems. We generalize a technique developed in \cite{colbeck10}, employing a specific representation of states and effects for which all reversible transformations correspond to orthogonal maps. The proof of Theorem \ref{dichotomic-theorem} relies heavily on the dichotomy of the state space, so it is tempting to postulate that for state spaces that are not dichotomic (e.g. for which there are extreme effects that are not ray-extreme) we may be able to reversibly generate entanglement. The odd-sided regular polygons describe such a class of state spaces. These state spaces are strongly self-dual (in fact they satisfy the stronger condition of \emph{bit symmetry} \cite{muller12}) and generate bipartite correlations that obey Tsirelson's bound \cite{janotta11}, making them particularly attractive candidates for building a reversible non-local toy model. We use a modification of the representation for local effects introduced in \cite{janotta11}, in which the ray-extreme effects of odd-sided polygon systems are of the form,
\begin{equation} \label{rep1}
e_i = \frac{1}{1 + r^2_n}
\begin{pmatrix}
1 \\
r_n s_n(i) \\
r_n c_n(i) \\
\end{pmatrix}, \qquad (i = 1,\ldots,n)
\end{equation}
where $s_n(i) = \sin \left(\frac{2 \pi i }{n} \right)$, $c_n(i) = \cos \left(\frac{2 \pi i }{n} \right)$, and $r_n = \sqrt{\sec (\pi / n)}$.
\begin{lemma}\label{orthogonalitylemma}
For all polygon state spaces there exists a parametrization of the states and effects such that all reversible transformations are orthogonal
\end{lemma}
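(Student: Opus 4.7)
My plan is to exhibit an explicit coordinate representation in which both pure states and ray-extreme effects are placed symmetrically on a regular $n$-gon in a plane orthogonal to the direction of the unit effect, so that every reversible transformation is forced to act as a dihedral symmetry of that polygon and hence as an orthogonal map. First I would complement the effect representation \eqref{rep1} with a symmetric parametrization of the pure states, for instance $s_j = (1, r_n s_n(j), r_n c_n(j))^T$ for $j = 1, \ldots, n$, and verify that $\innerproduct{u}{s_j} = 1$ with $u = (1,0,0)^T$, and that
\begin{equation}
\innerproduct{e_i}{s_j} = \frac{1 + r_n^2 \cos\!\left(\tfrac{2\pi(i-j)}{n}\right)}{1+r_n^2}
\end{equation}
reproduces the correct polygon probability rule, taking its maximum value $1$ at $i=j$ and remaining nonnegative elsewhere.

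Next I would invoke the fact from Section \ref{transformations-section} that the adjoint $\adjoint{T}$ of any reversible transformation permutes the set of ray-extreme effects $\{e_i\}$ and fixes $u$. Since every $e_i$ has the same first coordinate $1/(1+r_n^2)$, the span of $u$ is invariant under $\adjoint{T}$ and $\adjoint{T}$ restricts to the identity on it. This forces $\adjoint{T}$ to have the block form $\mathrm{diag}(1, R)$ in the decomposition $\mathbb{R}^3 = \mathrm{span}(u) \oplus u^{\perp}$, where $R$ is a linear map on $\mathbb{R}^2$ permuting the $n$ planar points $(r_n s_n(i), r_n c_n(i))$. The same block structure governs the action of $T$ on states, by the symmetry of the embedding.

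The crucial step is then the classical observation that the only linear maps $R : \mathbb{R}^2 \to \mathbb{R}^2$ permuting the vertices of a regular $n$-gon centred at the origin are elements of the dihedral group $D_n$, and these are all orthogonal matrices. Combined with the identity action on $\mathrm{span}(u)$, this gives that $\adjoint{T}$, and hence $T$, is orthogonal in $\mathbb{R}^3$.

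The main obstacle I expect is setting up the parametrization so that it is simultaneously consistent with the probabilistic rule for every $n$ (both even and odd) and genuinely symmetric between states and effects; once this is in place, the rest of the argument is a straightforward application of the rigidity of regular polygon symmetries. A minor but necessary check is to confirm that the permutations of $\{e_i\}$ arising from allowed reversible transformations do not exceed $D_n$, which follows because any such permutation must extend to a well-defined linear map on $\mathbb{R}^3$ fixing $u$, and the only linear extensions permuting the $n$ vertices are dihedral, consistent with the known bit-symmetry of polygon systems \cite{janotta11, muller12}.
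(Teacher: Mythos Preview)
Your argument is correct for a \emph{single} polygon system, but the lemma as stated and proved in the paper---and as required for the subsequent Theorem~2---concerns reversible transformations of the \emph{composite} max tensor product of $N$ polygon systems. The sentence immediately following the lemma makes this explicit: orthogonality of $T^\dagger$ is used to preserve inner products between \emph{composite} ray-extreme effects. Your dihedral-rigidity argument does not extend to this setting: the configuration of composite ray-extreme effects $\tilde e^{(1)}_{i_1}\otimes\cdots\otimes\tilde e^{(N)}_{i_N}$ in $\mathbb{R}^{3^N}$ is not the vertex set of a regular polytope with an a~priori known symmetry group, and identifying which linear maps permute this set is essentially the content of the theorem you are trying to set up. Knowing the local result does not help either, since at this stage we have no reason to believe a composite reversible $T$ factorizes into local pieces.

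The paper sidesteps this entirely with a tight-frame argument: it reparametrizes so that the local effects satisfy $\sum_i \tilde e^{(k)}_i \tilde e^{(k)\dagger}_i = n(k)\,\mathbb{I}_{3\times 3}$, whence the composite ray-extreme effects satisfy $\sum_i e_i e_i^\dagger \propto \mathbb{I}$ by tensorization. Any $T^\dagger$ that merely permutes the $e_i$ then automatically preserves this sum, giving $T^\dagger T=\mathbb{I}$ with no appeal to the structure of the permutation. This is the missing idea: you need a property that survives the tensor product and is invariant under \emph{arbitrary} permutations of the composite effects, and ``outer-product sum proportional to identity'' is exactly such a property.
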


\begin{proof}
for a state space consisting of $N$ polygons with the $k^\text{th}$ polygon having $n(k)$ vertices, define
\begin{equation}
\Lambda = \bigotimes_{k=1}^{N} \left( 1+r_{n(k)}^2\right)\begin{pmatrix}
1 & 0 & 0 \\
0 & \sqrt{2}/r_{n(k)} & 0 \\
0 & 0 & \sqrt{2}/r_{n(k)} \\
\end{pmatrix} .
\end{equation}
Using $\Lambda$ we can define our reparametrized effect vectors as $\tilde{e}_i = \Lambda (e_i )$ and state vectors as $\tilde{s}_i = \Lambda^{-1} (s_i)$. Note that 
\begin{equation}
\langle \tilde{e}_i , \tilde{s}_j \rangle = \langle (\Lambda^{-1})^\dagger\Lambda e_i , s_j \rangle = \langle e_i , s_j \rangle ,
\end{equation}
as $\Lambda$ is diagonal. Therefore this parametrization is operationally equivalent to the parametrization \eqref{rep1}. The local ray-extreme effects are now of the form 
\begin{equation}
\tilde{e}_i = 
\begin{pmatrix}
1 \\
\sqrt{2} \, s_n(i) \\
\sqrt{2} \, c_n(i) \\
\end{pmatrix}, \qquad (i = 1,\ldots,n) .
\end{equation}
In this representation 
\begin{equation}
\sum\limits_i \tilde{e}^{(k)}_i\tilde{e}^{(k)\dagger}_i = n(k) \mathbb{I}_{3\times 3} ,
\end{equation}
which follows from the identities 
\begin{align}
&\sum_i s_n(i) = \sum_i c_n(i) = \sum_i s_n(i) c_n(i) = 0 \nonumber \\ 
&\sum_i s_n^2(i) = \sum_i c_n^2(i) = n/2 .
\end{align}
Let $\{e_i\}$ be an enumeration of the composite ray-extreme effects, with $e_i = \tilde{e}^{(1)}_i\otimes\cdots\otimes \tilde{e}^{(N)}_i$. Then, 
\begin{align}\label{identity-equation}
\sum\limits_i e_i e_i^\dagger &= \left(\sum\limits_i \tilde{e}^{(1)}_i\tilde{e}^{(1)\dagger}_i\right)\otimes\cdots\otimes \left(\sum\limits_j \tilde{e}^{(N)}_j\tilde{e}^{(N)\dagger}_j\right) \nonumber \\
&= \bigotimes_{k=1}^{N} n(k)\mathbb{I}_{3\times 3}= \left(\prod\limits_{k=1}^N n(k) \right) \mathbb{I} .
\end{align}
Reversible transformations permute the set of ray-extreme effects, therefore
\begin{equation} \label{permutation-equation}
T^\dagger \left( \sum\limits_i e_i e_i^\dagger \right) T = \sum\limits_i e_i e_i^\dagger .
\end{equation}
From \eqref{identity-equation} and \eqref{permutation-equation} it follows that $T^\dagger T = \mathbb{I}$ .
\end{proof}
Because $T^\dagger$ is orthogonal it preserves the inner product between all ray-extreme effects. We will exploit this fact, along with the inner products between local ray-extreme effects, to show that $T^\dagger$ is adjacency-preserving on composite ray-extreme effects. For simplicity we focus on the simplest case of composite polygon systems consisting of identical subsystems. The inner product between two local ray-extreme effects is given by 
\begin{equation}
\langle \tilde{e}_i, \tilde{e}_j \rangle = 1+ 2 \cos \left(\frac{2\pi (i-j)}{n} \right)\, , \quad (i-j)=0,\dots,n-1 .
\end{equation}
The inner products of interest for our proof are:
\begin{align}
\langle \tilde{e}_i , \tilde{e}_i \rangle &= 3\, ,  \quad &&\forall \,  i, \nonumber \\ 
\langle \tilde{e}_i , \tilde{e}_{j} \rangle &= C_\text{max}\, , \quad && j = i\pm 1, \nonumber \\
\langle \tilde{e}_i , \tilde{e}_{j} \rangle &= C_\text{min}\, , \quad &&j= i + \frac{n\pm 1}{2}, \label{opposite}
\end{align}
where indexing is understood to be modulo $n$. Note that $C_\text{max}$ is the largest positive inner product between two local ray-extreme effects that are non-identical, and $C_\text{min}$ is the largest negative inner product between two local ray-extreme effects. If $\langle \tilde{e}_i , \tilde{e}_j \rangle = C_\text{max}$, i.e. $j=i\pm 1$, we say $\tilde{e}_i$ is \emph{neighboring} $\tilde{e}_j$, denoted $\tilde{e}_i \wedge \tilde{e}_j$. If $\langle \tilde{e}_i , \tilde{e}_j \rangle = C_\text{min}$, i.e. $j=i+\frac{n\pm 1}{2}$, we say $\tilde{e}_i$ is \emph{opposite} to $\tilde{e}_j$, denoted $\tilde{e}_i\vee \tilde{e}_j$.
\begin{theorem}
All reversible transformations on a max tensor product of identical, odd-sided, non-classical polygon systems are trivial.
\end{theorem}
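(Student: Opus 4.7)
The plan is to combine Lemma \ref{orthogonalitylemma}, which makes $\adjoint{T}$ an orthogonal transformation preserving all inner products between composite ray-extreme effects, with Lemma \ref{adjacency-preserving-lemma}: it suffices to prove $\adjoint{T}$ sends subsystem-adjacent pairs to subsystem-adjacent pairs. Throughout, for composite ray-extreme effects $e = \bigotimes_k e^{(k)}$ and $f = \bigotimes_k f^{(k)}$ one has $\innerproduct{e}{f} = \prod_k \innerproduct{e^{(k)}}{f^{(k)}}$, where each factor is either $3$ (when the components coincide) or lies in $\{1+2\cos(2\pi j/n) : 1\le j \le n-1\}$, a set whose elements all have magnitude strictly less than $3$.

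First I would show that $\adjoint{T}$ preserves neighboring-adjacency. If $e\ne f$ differ on $k$ subsystems then $|\innerproduct{e}{f}| \le 3^{N-k} C_{\text{max}}^k$, and since $C_{\text{max}} < 3$, this bound is strictly maximized at $k=1$ with equality exactly when the single differing components are neighboring. Hence $3^{N-1} C_{\text{max}}$ is the unique maximum inner product between distinct composite ray-extreme effects, and orthogonality of $\adjoint{T}$ forces it to permute the neighboring-adjacent pairs.

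Next, I would analyze the $2N$ neighboring-neighbors of a fixed effect $e$. They split naturally into $N$ ``axis pairs'' (one per subsystem, consisting of the two local shifts of $e^{(i)}$ by $\pm 1$). Two neighboring-neighbors of $e$ sharing an axis have inner product $3^{N-1}(1+2\cos(4\pi/n))$, while two on different axes have inner product $3^{N-2} C_{\text{max}}^2$. Setting these equal and using $\cos(4\pi/n) = 2\cos^2(2\pi/n) - 1$ reduces to $(2c+1)(c-1)=0$ where $c=\cos(2\pi/n)$; the two roots $c=1$ and $c=-1/2$ correspond to the excluded limit $n\to\infty$ and to $n=3$ respectively, so for all odd $n \ge 5$ these two inner products are distinct. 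Since $\adjoint{T}$ preserves inner products, it therefore preserves the axis partition of neighboring-neighbors at every base point, inducing for each effect $e$ a permutation $\sigma_e$ of $\{1,\ldots,N\}$.

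I would then verify that $\sigma_e$ is globally constant. If $f$ is a neighboring-neighbor of $e$ on subsystem $i$, then $f$ lies in the subsystem-$i$ axis pair at $e$ and $e$ lies in the subsystem-$i$ axis pair at $f$, so the edge $\{\adjoint{T}(e),\adjoint{T}(f)\}$ inherits the label $\sigma_e(i)$ from one endpoint and $\sigma_f(i)$ from the other; since each edge in the Cartesian product neighboring graph $(C_n)^N$ has a unique subsystem label, $\sigma_e(i) = \sigma_f(i)$. Connectedness of the graph then forces $\sigma_e = \sigma$ globally. Finally, any subsystem-$i$-adjacent pair $(e,f)$, even with $e^{(i)}$ and $f^{(i)}$ not neighboring, is joined by a path in the neighboring graph that stays entirely on subsystem $i$ (via the connectedness of the local cycle $C_n$); $\adjoint{T}$ maps this path to one entirely on subsystem $\sigma(i)$, so $\adjoint{T}(e)$ and $\adjoint{T}(f)$ differ only on subsystem $\sigma(i)$, and Lemma \ref{adjacency-preserving-lemma} yields triviality of $T$. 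The main technical obstacle is the algebraic separation of same-axis from different-axis inner products; happily it collapses to the quadratic $(2c+1)(c-1)$ whose roots exactly pinpoint the excluded cases.
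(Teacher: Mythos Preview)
Your argument is correct and takes a genuinely different route from the paper. Both proofs start from Lemma~\ref{orthogonalitylemma} and the characterisation of $3^{N-1}C_{\max}$ as the unique extremal inner product identifying neighbouring-adjacent pairs. From there the paper instead brings in $C_{\min}$: for a neighbouring pair $\tilde e_1,\tilde e_2$ on subsystem $k$ there is a unique $\tilde e_s$ opposite to both, and the composite triple $(e_1,e_2,e_s)$ is pinned down by the extremal values $3^{N-1}C_{\max}$ and $3^{N-1}C_{\min}$, so its image under $\adjoint{T}$ is again such a triple at a common subsystem $k'$. Because $\dim V^{(k)}=3$ and $\{\tilde e_1,\tilde e_2,\tilde e_s\}$ spans $V^{(k)}$, \emph{every} effect adjacent to $e_1$ on subsystem $k$ is a linear combination of $e_1,e_2,e_s$ and hence, by linearity, maps to something adjacent to $\adjoint{T}(e_1)$ on subsystem $k'$; Lemma~\ref{adjacency-preserving-lemma} then finishes. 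You avoid $C_{\min}$ and the $3$-dimensionality trick, replacing them with the same-axis/different-axis dichotomy among the $2N$ neighbouring-neighbours of a point and a path argument along the local cycle $C_n$. The paper's route is shorter; yours is more combinatorial and, since it never uses $\dim V^{(k)}=3$, would transplant more readily to hypothetical higher-dimensional local systems whose neighbouring graph is a Cartesian power.

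One remark on your middle step: the claim ``connectedness forces $\sigma_e=\sigma$ globally'' is not justified as written. What you establish is only that $\sigma_e(i)=\sigma_f(i)$ when $e,f$ are $i$-neighbouring; for $N\ge 3$ this does \emph{not} determine $\sigma_e(j)$ for $j\ne i$ from connectedness alone (e.g.\ on $(\mathbb{Z}_n)^3$ let $\sigma_{(a,b,c)}$ fix $1$ and swap $2,3$ exactly when $a$ is odd---this satisfies your local constraint but is not constant). Fortunately you never actually use global constancy: your final path from $e$ to $f$ moves only along subsystem-$i$ edges, and along such a path $\sigma_\bullet(i)$ \emph{is} constant by exactly what you proved, so the image path lies in a single subsystem and adjacency is preserved. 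If you want the global statement for its own sake, it follows from the square structure of $(C_n)^N$: the image of an $(i,j)$-square is a $4$-cycle, and for $n\ge 5$ every $4$-cycle in $(C_n)^N$ has alternating edge labels, forcing $\sigma_e(j)=\sigma_f(j)$ across an $i$-edge as well.
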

\begin{proof}
For $n>3$ observe that $C_\text{max} \neq C_\text{min}$ and $|C_\text{max}|,|C_\text{min}|<3$. For neighboring local ray-extreme effects, i.e. $\tilde{e}_1$, $\tilde{e}_2$, there is a unique ray-extreme effect $\tilde{e}_s$ such that $\tilde{e}_s \vee \tilde{e}_1$ and $\tilde{e}_s \vee \tilde{e}_2$, given by $s = 3/2 + n/2$ (simultaneously satisfying the condition given in \eqref{opposite} for $i=1,2$). For arbitrary $1 \leq k \leq N$, consider the adjacent composite ray-extreme effects
\begin{align}
e_1 &= \tilde{e}^{(1)} \otimes\cdots\otimes  \tilde{e}^{(k)}_1\otimes \cdots\otimes \tilde{e}^{(N)} \nonumber \\ 
e_2 &= \tilde{e}^{(1)} \otimes\cdots\otimes  \tilde{e}^{(k)}_2\otimes \cdots\otimes \tilde{e}^{(N)} \nonumber \\ 
e_3 &= \tilde{e}^{(1)} \otimes\cdots\otimes  \tilde{e}^{(k)}_s\otimes \cdots\otimes \tilde{e}^{(N)} .
\end{align}
Note that $\langle e_1, e_2 \rangle = 3^{N-1}C_\text{max}$, the largest possible positive inner product between non-identical composite ray-extreme effects, and for $i=1,2$, $\langle e_{i}, e_s \rangle=3^{N-1}C_\text{min}$, the most negative inner product possible between two such effects. These inner products can only be achieved by neighboring adjacent and opposite adjacent effects respectively. $T^\dagger$ must preserve the inner product, therefore $T^\dagger (e_1)$ must be  neighboring and adjacent to $T^\dagger (e_2)$ on some subsystem $k'$, and $T^\dagger (e_3)$ must be adjacent and opposite to both $T^\dagger (e_1)$ and $T^\dagger (e_2)$, i.e.
\begin{align}
T^\dagger (e_1) &= \tilde{f}^{(1)} \otimes\cdots\otimes  \tilde{f}^{(k')}_i\otimes \cdots\otimes \tilde{f}^{(N)} \nonumber \\ 
T^\dagger (e_2) &= \tilde{f}^{(1)} \otimes\cdots\otimes  \tilde{f}^{(k')}_j\otimes \cdots\otimes \tilde{f}^{(N)} \nonumber \\ 
T^\dagger (e_3) &= \tilde{f}^{(1)} \otimes\cdots\otimes  \tilde{f}^{(k')}_l\otimes \cdots\otimes \tilde{f}^{(N)} ,
\end{align}
where $\tilde{f}_i \wedge \tilde{f}_j$, $\tilde{f}_l \vee \tilde{f}_{i}$ and $\tilde{f}_l \vee \tilde{f}_{j}$. As dim$(V^{(k)})=3$ and the set $\{\tilde{e}^{(k)}_1, \tilde{e}^{(k)}_2, \tilde{e}^{(k)}_s\}$ is linearly independent, all effects that are adjacent to $e_1$ on the $k^\text{th}$ subsystem lie in the linear span of $\{e_1, e_2, e_3\}$. By the linearity of $T^\dagger$, the image of any such effect is adjacent to $T^\dagger(e_1)$ on subsystem $k'$. Since $k$ was arbitrary, by Lemma \ref{adjacency-preserving-lemma} we have that $T^\dagger$ is trivial.
\end{proof}

It is worth noting how this proof breaks down when $n=3$ and the subsystems are classical trits. In this case, $C_\text{max} = C_\text{min} = 0$, so the inner product between all non-identical ray-extreme effects is zero and there can exist reversible transformations that do not preserve the adjacency of effects.


\begin{thebibliography}{10}

\bibitem{ekert91}
A.~K. Ekert.
\newblock Quantum cryptography based on {B}ell's theorem.
\newblock {\em Phys. Rev. Lett.}, 67:661, 1991.

\bibitem{bell64}
J.~S. Bell.
\newblock On the {E}instein-{P}odolsky-{R}osen paradox.
\newblock {\em Physics}, 1:195, 1964.

\bibitem{pr94}
S.~Popescu and D.~Rohrlich.
\newblock Quantum nonlocality as an axiom.
\newblock {\em Found. Phys.}, 24:379, 1994.

\bibitem{barnum08}
H.~Barnum, J.~Barrett, M.~Leifer, and A.~Wilce.
\newblock Teleportation in general probabilistic theories.
\newblock In {\em Mathematical Foundations of Information Flow (Proceedings of
  the Clifford Lectures 2008)}, page~25. American Mathematical Society, 2012.

\bibitem{barnum06}
H.~Barnum, J.~Barrett, M.~Leifer, and A.~Wilce.
\newblock Cloning and broadcasting in generic probabilistic theories, 2006.
\newblock \emph{arXiv:quant-ph/0611295}.

\bibitem{hardy01}
L.~Hardy.
\newblock Quantum theory from five reasonable axioms, 2001.
\newblock \emph{arXiv:quant-ph/0101012}.

\bibitem{hardy11}
L.~Hardy.
\newblock Reformulating and reconstructing quantum theory, 2011.
\newblock \emph{arXiv:1104.2066 [quant-ph]}.

\bibitem{chiribella11}
G.~Chiribella, G.~M. D'Ariano, and P.~Perinotti.
\newblock Informational derivation of quantum theory.
\newblock {\em Phys. Rev. A}, 84:012311, 2011.

\bibitem{masanes11}
L.~Masanes and M.~P. M\"uller.
\newblock A derivation of quantum theory from physical requirements.
\newblock {\em New J. Phys.}, 13:063001, 2011.

\bibitem{masanes12}
G.~de~la Torre, L.~Masanes, A.~J. Short, and M.~P. M\"uller.
\newblock Deriving quantum theory from its local structure and reversibility.
\newblock {\em Phys. Rev. Lett.}, 109:090403, 2012.

\bibitem{barnum14}
H.~Barnum, M.~P. Mueller, and C.~Ududec.
\newblock Higher-order interference and single-system postulates characterizing
  quantum theory, 2014.
\newblock \emph{arXiv:1403.4147 [quant-ph]}.

\bibitem{barrett05}
J.~Barrett.
\newblock Information processing in generalized probabilistic theories.
\newblock {\em Phys. Rev. A}, 75:032304, 2007.

\bibitem{barrett05.2}
J.~Barrett, N.~Linden, S.~Massar, S.~Pironio, S.~Popescu, and D.~Roberts.
\newblock Nonlocal correlations as an information-theoretic resource.
\newblock {\em Phys. Rev. A}, 71:022101, 2005.

\bibitem{masanes06}
L.~Masanes, A.~Acin, and N.~Gisin.
\newblock General properties of nonsignaling theories.
\newblock {\em Phys. Rev. A}, 73:012112, 2006.

\bibitem{dariano07}
G.~M. D'Ariano.
\newblock No-signaling, dynamical independence, and the local observability
  principle.
\newblock {\em J. Phys. A}, 40:8137, 2007.

\bibitem{chiribella10}
G.~Chiribella, G.~M. D'Ariano, and P.~Perinotti.
\newblock Probabilistic theories with purification.
\newblock {\em Phys. Rev. A}, 81:062348, 2010.

\bibitem{barnum12}
H.~Barnum and A.~Wilce.
\newblock Local tomography and the {J}ordan structure of quantum theory, 2009.
\newblock \emph{arXiv:1202.4513 [quant-ph]}.

\bibitem{muller11}
Ll. Masanes, M.~P. Müller, D.~Pérez-García, and R.~Augusiak.
\newblock Entanglement and the three-dimensionality of the bloch ball, 2011.
\newblock \emph{arXiv:1111.4060 [quant-ph]}.

\bibitem{pawlowski09}
M.~Pawlowski, T.~Paterek, D.~Kaszlikowski, V.~Scarani, A.~Winter, and
  M.~Zukowski.
\newblock Information causality as a physical principle.
\newblock {\em Nature}, 461:1101, 2009.

\bibitem{masanes13}
Lluís Masanes, Markus~P. Müller, Remigiusz Augusiak, and David
  Pérez-García.
\newblock Existence of an information unit as a postulate of quantum theory.
\newblock 110(41):16373--16377, 2013.

\bibitem{short&barrett10}
A.J. Short and J.~Barrett.
\newblock Strong nonlocality: a trade-off between states and measurements.
\newblock {\em New J. Phys.}, 12:033034, 2010.

\bibitem{colbeck10}
D.~Gross, M.~P. Mueller, R.~Colbeck, and O.~C.~O. Dahlsten.
\newblock All reversible dynamics in maximally non-local theories are trivial.
\newblock {\em Phys. Rev. Lett.}, 104:080402, 2010.

\bibitem{alsafi14}
S.~W. Al-Safi and A.~J. Short.
\newblock Reversible dynamics in strongly non-local boxworld systems.
\newblock {\em J. Phys. A: Math. Theor.}, 47:325303, 2014.

\bibitem{janotta11}
P.~Janotta, C.~Gogolin, J.~Barrett, and N.~Brunner.
\newblock Limits on nonlocal correlations from the structure of the local state
  space.
\newblock {\em New J. Phys.}, 13:063024, 2011.

\bibitem{chiribella12}
G.~Chiribella, G.~M. D'Ariano, and P.~Perinotti.
\newblock Quantum theory, namely the pure and reversible theory of information.
\newblock {\em Entropy}, 14:1877, 2012.

\bibitem{vandam05}
W.~van Dam.
\newblock Implausible consequences of superstrong nonlocality.
\newblock {\em Natural Computing}, 12:9, 2013.

\bibitem{buhrman06}
H.~Buhrman, M.~Christandl, F.~Unger, S.~Wehner, and A.~Winter.
\newblock Implications of superstrong nonlocality for cryptography.
\newblock {\em Proc. Roy. Soc. A}, 462:1919, 2006.

\bibitem{almeida10}
M.~L. Almeida, J.-D. Bancal, N.~Brunner, A.~Ac\'\i{}n, N.~Gisin, and
  S.~Pironio.
\newblock Guess your neighbor's input: A multipartite nonlocal game with no
  quantum advantage.
\newblock {\em Phys. Rev. Lett.}, 104:230404, 2010.

\bibitem{pfister12}
C.~Pfister.
\newblock One simple postulate implies that every polytopic state space is classical, 2012.
\newblock (Master's Thesis, Institute for Theoretical Physics, ETH Zurich)
  \emph{arXiv:1203.5622 [quant-ph]}

\bibitem{short10}
A.~J Short and S.~Wehner.
\newblock Entropy in general physical theories.
\newblock {\em New J. Phys.}, 12:033023, 2010.

\bibitem{gale60}
D.~Gale.
\newblock Theory of linear economic models.
\newblock New York: McGrawhill 1960.

\bibitem{pawlowski12}
M.~Paw\l{}owski and A.~Winter.
\newblock “Hyperbits”: The information quasiparticles.
\newblock {\em Phys. Rev. A}, 85:022331, 2012.

\bibitem{marcus59}
M.~Marcus and B.~N. Moyls.
\newblock Transformations on tensor product spaces.
\newblock {\em Pacific J. Math.}, 9(4):1215--1221, 1959.

\bibitem{barker75}
G.~P. Barker and R.~Loewy.
\newblock The structure of cones of matrices.
\newblock {\em Linear Algebra and its Applications}, 12(1):87 -- 94, 1975.

\bibitem{alsafi15}
S.~W. Al-Safi.
\newblock Quantum theory from the perspective of general probabilistic
  theories, 2015.
\newblock (Doctoral Thesis, University of Cambridge)
  \emph{https://www.repository.cam.ac.uk/handle/1810/247218}.

\bibitem{muller12}
M.~P. M\"uller and C.~Ududec.
\newblock The structure of reversible computation determines the self-duality
  of quantum theory.
\newblock {\em Phys. Rev. Lett.}, 108:130401, 2012.

\bibitem{loewy75}
R.~Loewy and H.~Schneider.
\newblock Indecomposable cones.
\newblock {\em Linear Algebra and its Applications}, 11(3):235 -- 245, 1975.

\end{thebibliography}
\end{document}